\newtheorem{claim}[theorem]{Claim}
\begin{document}

\title{Regular and First-order List Functions}

\author[1]{Miko{\l}aj Boja\'nczyk}
\author[2]{Laure Daviaud}
\author[3]{S. Krishna}
\affil[1]{MIMUW, University of Warsaw, Poland}
\affil[2]{DIMAP, Department of Computer Science, University of Warwick, UK}
\affil[3]{Department of Computer Science, IIT Bombay, India}


\maketitle
\abstract

We define two classes of functions,  called \emph{regular (respectively, first-order) list functions}, which manipulate objects such as lists, lists of lists, pairs of lists, lists of pairs of lists, etc. The definition is in the style of regular expressions: the functions are constructed by starting with some basic functions (e.g.~projections from pairs, or head and tail operations on lists) and putting them together using four combinators (most importantly, composition of functions).
Our main results are that first-order list functions are exactly the same as first-order transductions, under a suitable encoding of the inputs; and the regular list functions are exactly the same as  \MSO-transductions. 

\section{Introduction}
\label{section:introduction}
Transducers, i.e.~automata which produce output, are as old as automata themselves, appearing already in Shannon's paper\cite[Section 8]{shannon1948mathematical}. This paper is mainly about string-to-string transducers. Historically, the most studied classes of string-to-string functions were the sequential and rational functions, see e.g.~\cite[Section 4]{Filiot:2016iw} or~\cite{Sakarovitch:2009dk}. Recently, much attention has been devoted to a third,  larger, class of string-to-string functions that we call ``regular'' following~\cite{Engelfriet:2001kv} and~\cite{Alur:2010gc}. The regular string-to-string functions are those recognised by two-way automata~\cite{Aho:1970is}, equivalently by  \MSO transductions~\cite{Engelfriet:2001kv}, equivalently by  streaming string transducers~\cite{Alur:2010gc}. 

In~\cite{Alur:2014gs}, Alur et al.~give yet another characterisation of the regular string-to-string functions,  in the spirit of regular expressions. They identify several basic string-to-string functions, and several ways of combining existing functions to create new ones, in such a way that exactly the regular functions are generated. The goal of this paper is to do the same, but with a different choice of basic functions and combinators. Below we describe some of the differences between our approach and that of~\cite{Alur:2014gs}.

The first distinguishing feature of our approach is that, instead of considering only functions from strings to strings, we allow a richer type system, where functions can manipulate objects such as pairs, lists of pairs, pairs of lists etc. (Importantly, the nesting of types is bounded, which means  that the objects  can be viewed as unranked sibling ordered trees of bounded depth.) This richer type system is a form of syntactic sugar, because the new types can be encoded using strings over a finite alphabet, e.g.~$[(a,[b]),(b,[a,a]), (a,[])]$, and the devices we study are powerful enough to operate on such encodings. Nevertheless, we believe that the richer type system allows us to identify a  more natural and canonical base of functions, with benign   functions such as projection  $\Sigma \times \Gamma \to \Sigma$, or  append $\Sigma \times \Sigma^* \to \Sigma^*$.    Another advantage of the  type system is its tight  connection with programming: since we use standard types and functions on them, our entire set of basic functions and  combinators can be implemented in one screenful of Haskell code,  consisting mainly of giving new names to existing operations. 

A second distinguishing property of our approach is its emphasis on composition of functions. Regular string-to-string functions are closed under composition, and therefore it is natural to add composition of functions as a combinator. However, the system of Alur et al.~is designed so that composition is allowed but not needed to get the completeness result~\cite[Theorem 15]{Alur:2014gs}. In contrast, composition is absolutely essential to our system. We believe that having the ability to simply compose functions -- which is both intuitive and powerful --  is one of the central appeals of transducers, in contrast to other fields of formal language theory  where more convoluted forms of composition are needed, such as wreath products of semigroups  or nesting of languages. With composition, we can leverage deep decomposition results from the algebraic literature (e.g.~the Krohn-Rhodes Theorem or Simon's Factorisation Forest Theorem), and obtain a basis with very simple  atomic operations.
 
Apart from being easily programmable and relying on composition, our system has two other design goals. The first 
goal is that we want it to be easily extensible; we discuss this goal in the conclusions. The second goal is that we want to identify the iteration mechanisms needed for regular string-to-string functions.
 
To better understand the role of iteration, the technical focus of the paper is on  the first-order fragment of regular string-to-string functions~\cite[Section 4.3]{Filiot:2016iw}. Our main technical result, Theorem~\ref{thm:transductions}, shows  a family of atomic functions and combinators that describes exactly  the first-order fragment. We believe that the  first-order fragment is arguably as important as the bigger set of regular functions.  Because of the first-order restriction, some well known sources of iteration, such as modulo counting, are not needed for the first-order fragment. In fact,  one could say that the functions from Theorem~\ref{thm:transductions}  have  no iteration at all  (of course, this can be debated). Nevertheless, despite this lack of iteration, the first-order fragment seems to contain the essence of regular string-to-string functions. In particular, our second main result, which characterises  all regular string-to-string functions in terms of combinators, is obtained by adding product operations for finite groups to the basic functions and then simply applying  Theorem~\ref{thm:transductions} and existing decomposition results from language theory. 

\paragraph*{Organisation of the paper.} In Section~\ref{section:definitions}, we define the class of first-order list functions. In Section~\ref{section:examples}, we give many examples of such functions. 
One of our main results is that the class of first-order list functions is exactly the class of first-order transductions. 
To prove this, we first show in Section~\ref{section:aperiodic} that  first-order list functions contain all the aperiodic rational functions.  Then, in Sections~\ref{section:fotransd} and~\ref{section:register}, we state the result and complete its proof. 
In Section~\ref{section:regular}, we generalise our result to deal with \MSO-transductions. We conclude the paper with future works in Section~\ref{section:conclusion}. 

\section{Definitions}
\label{section:definitions}
We use types that are built starting from finite sets (or even one element sets) and using disjoint unions (co-products), products and lists. More precisely, the set of types we consider is given by the following grammar:
$$\types := \text{every one-element set}  |  \types + \types  |  \types \times \types  |  \types^*$$
For example, starting from elements $a$ of type $\Sigma \in \types$ and $b$ of type $\Gamma \in \types$, one can construct the co-product $\{a, b\}$ of type  $\Sigma + \Gamma$, the product $(a,b)$ of type $\Sigma \times \Gamma$, and the following lists $[a,a,a]$ of type $\Sigma^*$ and $[a,a,b,b,a,a,b]$ of type $(\Sigma + \Gamma)^*$.

For $\Sigma$ in $\types$, we define $\Sigma^+$ to be $\Sigma \times \Sigma^*$.

\subsection{First-order list functions}

The class of functions studied in this paper, which we call \emph{first-order list functions} are functions on the objects defined by the above grammar. It is meant to be large enough to contain natural functions such as projections or head and tail of a list, and yet small enough to have good computational properties (very efficient evaluation, decidable equivalence, etc.). The class is defined by choosing some basic list functions and then applying some combinators.

\begin{definition}[First-order list functions]
\label{def:lf}
Define the \emph{first-order list functions} to be the smallest class of functions having as domain and co-domain any $\Sigma$ from $\types$, which contains all the constant functions, the functions from Figure~\ref{fig:product-coproduct} ($\mathsf{projection}$, $\mathsf{co-projection}$ and $\mathsf{distribute}$), the functions from Figure~\ref{fig:list} ($\mathsf{reverse}$, $\mathsf{flat}$, $\mathsf{append}$, $\mathsf{co-append}$ and $\mathsf{block}$) and which is closed under applying the disjoint union, composition, map and pairing combinators defined in Figure~\ref{fig:combinators}.
\end{definition}

\begin{figure}[htbp]
\small
\fbox{\parbox{\textwidth}{
\begin{align*}
& \bullet \mathsf{\textbf{projection}_1}  && \bullet \bf \mathsf{\textbf{coprojection}}  & & \bullet \bf \mathsf{\textbf{distribute}} \\
& {\Sigma \times \Gamma \longrightarrow \Sigma} & & \Sigma \longrightarrow \Sigma + \Gamma & & (\Sigma + \Gamma) \times \Delta \longrightarrow (\Sigma \times \Delta) + (\Gamma \times \Delta)\\
&  (x,y) \longmapsto x & &  x \longmapsto x & & (x,y) \longmapsto (x,y) 
\end{align*}
}}
\caption{\label{fig:product-coproduct}Basic functions for product and co-product. To avoid clutter, we write only one of the two projection functions but we allow to use both. The types $\Sigma,\Gamma,\Delta$ are from $\types$.}
\end{figure}

\begin{figure}[htbp]
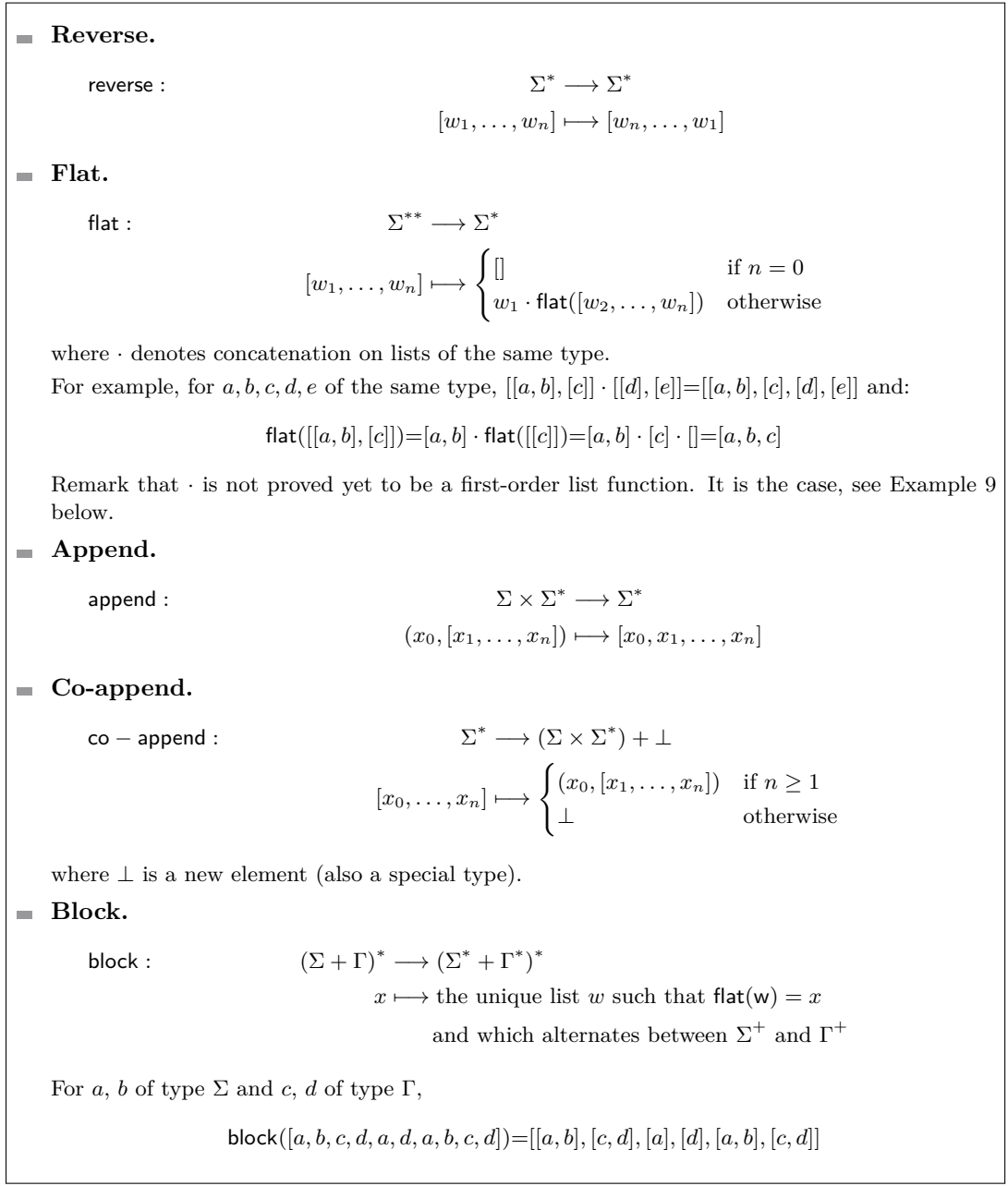

\small
\fbox{\parbox{\textwidth}{
\begin{itemize}
\item {\bf \normalsize Reverse.}
\begin{align*}
& \mathsf{reverse}: & \Sigma^{*} & \longrightarrow \Sigma^* \\
&& [w_1,\ldots,w_n] & \longmapsto [w_n,\ldots,w_1]
\end{align*}

\item {\bf \normalsize Flat.}
\begin{align*}
& \mathsf{flat}: & \Sigma^{**} & \longrightarrow \Sigma^* \\
&& [w_1,\ldots,w_n] & \longmapsto 
\begin{cases}
	[] & \text{if $n=0$}\\
	w_1 \cdot \mathsf{flat}([w_2, \dots, w_n]) & \text{otherwise}
\end{cases}
\end{align*}
where $\cdot$ denotes concatenation on lists of the same type. 

For example, for $a,b,c,d,e$ of the same type, ${[[a,b],[c]] \cdot [[d],[e]]} {=} {[[a,b],[c],[d],[e]]}$ and: 
$$\mathsf{flat}([[a,b],[c]]) {=} {[a,b]\cdot \mathsf{flat}([[c]])} {=} {[a,b]\cdot [c] \cdot []} {=} [a,b,c]$$ 
Remark that $\cdot$ is not proved yet to be a first-order list function. It is the case, see Example~\ref{ex:list-concatenation} below.

\item {\bf \normalsize Append.}
\begin{align*}
& \mathsf{append}: & \Sigma \times \Sigma^* & \longrightarrow \Sigma^* \\
&& (x_0,[x_1,\ldots,x_n]) & \longmapsto [x_0,x_1,\ldots,x_n]
\end{align*}

\item {\bf \normalsize Co-append.}
\begin{align*}
& \mathsf{co-append}: & \Sigma^* & \longrightarrow (\Sigma \times \Sigma^*)+\bot \\
&& [x_0,\ldots,x_n] & \longmapsto 
\begin{cases}
(x_0,[x_1,\ldots,x_n]) & \text{if $n \ge 1$}\\
\bot & \text{otherwise}
\end{cases}
\end{align*}

where $\bot$ is a new element (also a special type).

\item {\bf \normalsize Block.}
\begin{align*}
& \mathsf{block}: & (\Sigma + \Gamma)^* & \longrightarrow (\Sigma^* + \Gamma^*)^* \\
&& x & \longmapsto \text{the unique list $w$ such that $\mathsf{flat(w)}=x$} \\ 
&&& \quad \quad \text{and which alternates between $\Sigma^+$ and $\Gamma^+$}
\end{align*}

For $a$, $b$ of type $\Sigma$ and $c$, $d$ of type $\Gamma$, 
$$\mathsf{block}([a,b,c,d,a,d,a,b,c,d]) {=} [[a,b],[c,d],[a],[d],[a,b],[c,d]]$$
\end{itemize}
}}
\caption{\label{fig:list}Basic functions for lists. The types $\Sigma,\Gamma,\Delta$ are from $\types$.}
\end{figure}

\begin{figure}[htbp]
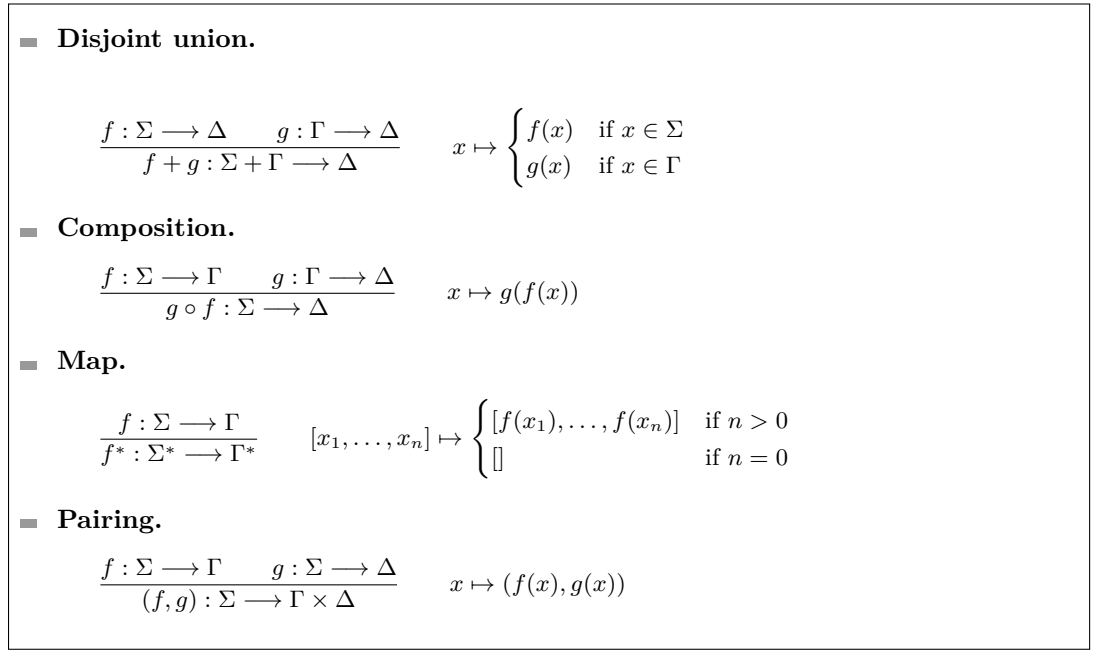

\fbox{\parbox{\textwidth}{
\small
\begin{itemize}
\item {\bf \normalsize Disjoint union.}

\begin{align*}
\combrule{{f: \Sigma \longrightarrow \Delta} \qquad {g: \Gamma \longrightarrow \Delta}}{f+g: \Sigma + \Gamma \longrightarrow \Delta} 
\qquad x \mapsto 
\begin{cases}
  	f(x) & \text{if $x \in \Sigma$}\\
  	g(x) & \text{if $x \in \Gamma$}
\end{cases}
\end{align*}

\item{\bf \normalsize Composition.} 
\begin{align*}
\combrule{{f: \Sigma \longrightarrow \Gamma} \qquad {g: \Gamma \longrightarrow \Delta}}{g \circ f: \Sigma \longrightarrow \Delta} 
\qquad x \mapsto g(f(x))
\end{align*}

\item {\bf \normalsize Map.} 
\begin{align*}
\combrule{{f: \Sigma \longrightarrow \Gamma}}{f^*: \Sigma^* \longrightarrow \Gamma^*} 
\qquad [x_1,\ldots,x_n] \mapsto 
\begin{cases}
  	[f(x_1),\ldots,f(x_n)] & \text{if $n>0$}\\
  	[] & \text{if $n=0$}
\end{cases}
\end{align*}

\item {\bf \normalsize Pairing.} 
\begin{align*}
\combrule{{f: \Sigma \longrightarrow \Gamma} \qquad {g: \Sigma \longrightarrow \Delta}}{(f,g): \Sigma \longrightarrow \Gamma \times \Delta} \qquad x \mapsto (f(x),g(x))
\end{align*}
\end{itemize}
}}
\caption{\label{fig:combinators}Combinators of functions. The types $\Sigma,\Gamma,\Delta$ are from $\types$.}
\end{figure}

\section{Examples}
\label{section:examples}
Natural functions such as identity, functions on finite sets, concatenation of lists, extracting the first element (head), the last element and the tail of a list,... are first order list functions. In this section, we present those examples and prove that they are first-order list functions. Some of these examples  will be used in later constructions.

\begin{myexample}[Identity]
\label{ex:identity}
For every $\Sigma$ in $\types$, the identity function: 
$x \in \Sigma \mapsto x \in \Sigma$ is a first-order list function.
This is achieved by induction on the types: the identity function over a one set element is a constant function and thus a first-order list function. For $\Sigma$ and $\Gamma$ in $\types$, the identity function over $\Sigma + \Gamma$ is the disjoint union of the co-projections $\Sigma \to \Sigma + \Gamma$ and $\Gamma \to \Sigma + \Gamma$. The identity function over $\Sigma \times \Gamma$ is the pairing of the projections $\Sigma \times \Gamma \to \Sigma$ and $\Sigma \times \Gamma \to \Gamma$. Finally, the identity function over $\Sigma^*$ is constructed from the identity function over $\Sigma$ using the Map combinator.
\end{myexample}

\begin{myexample}[List unit function]
\label{ex:unit}
For every $\Sigma$ in $\types$, the function: 
\begin{align*}
x \in \Sigma \mapsto [x] \in \Sigma^*
\end{align*}
is a first-order list function.
This is achieved by first using the pairing combinator which pairs the identity function on $\Sigma$: $f(x)=x$ and the constant function $g$ which maps every $x \in \Sigma$  to the empty list $[]$ of type $\Sigma^*$. 
The pairing combinator gives $(x, [])$ from $x$. This is followed by using the $\mathsf{append}$ function on this pair, giving as result $[x]$.  
\end{myexample}

\begin{myexample}[Functions on finite sets]
\label{ex:finite-functions}
If $\Sigma,\Gamma$ are finite sets, then every function  of type $\Sigma \to \Gamma$ is a first-order list function. This is done by viewing $\Sigma$ as a disjoint union of one-element sets, and then combining  constant functions using the disjoint union combinator.
\end{myexample}

\begin{myexample}[Heads, Last element and Tails]
\label{ex:headtail}
For every $[x_0,x_1, \dots, x_n] \in \Sigma^*$, the function $\mathsf{head}: \Sigma^* \to \Sigma$ which extracts the head $x_0 \in \Sigma$ from the list, the function $\mathsf{last}: \Sigma^* \to \Sigma$ which extracts the last element $x_n \in \Sigma$ from the list as well as the function $\mathsf{tail}: \Sigma^* \to \Sigma^*$ which extracts the tail $[x_1, \dots, x_n]$ from the list are first-order list functions.  To see this, $\mathsf{head}$ is obtained by first using $\mathsf{co-append}$ to  $[x_0,x_1, \dots, x_n]$ obtaining the pair $(x_0, [x_1, \dots, x_n])$ and then projecting out the first component. Likewise, $\mathsf{tail}$ is obtained by projecting the second component. The last element is obtained by reversing the list first, and using $\mathsf{head}$.
\end{myexample}

\begin{myexample}[Length up to a threshold]
\label{ex:length}
For every set $\Sigma$ and $n \in \Nat$, 
\begin{align*}
\mathsf{len}_n : && \Sigma^* & \to \set{0,\ldots,n} \\ && [x_1,\ldots,x_i] & \mapsto 
\begin{cases}
  	i & \text{if $i \le n$} \\ n & \text{otherwise}
\end{cases}
\end{align*}
is a first-order list function.
The proof is by induction on $n$. The function $\mathsf{len}_0: \Sigma^* \to 0$ is constant, and therefore it is a first-order list function. For $n > 0$, we first apply $\mathsf{tail}$. This is composed with the function $\mathsf{len}_{n-1}$
from the induction assumption, and then further composed with the following function:
$x \in \set{0,\ldots,n-1} \mapsto x+1 \in \set{1,\ldots,n}$
which is a first-order list function by Example~\ref{ex:finite-functions}.
For example,  
\begin{align*}
\mathsf{len}_2([x_1,x_2,x_3])=\mathsf{len}_1(\mathsf{tail}([x_1,x_2,x_3]))+1=\mathsf{len}_0(\mathsf{tail}([x_2,x_3]))+1+1=2.
\end{align*}
\end{myexample}

\begin{myexample}[Filter]
\label{ex:filter}
For $\Sigma,\Gamma \in \types$, consider the function:
\begin{align*}
  f : (\Sigma + \Gamma)^* \to \Sigma^*
\end{align*}
which removes the $\Gamma$ elements from the input list. Let us explain why this is a first-order list function. Consider the function from $\Sigma + \Gamma$ to $\Sigma^*$:
\begin{align}\label{eq:unit-or-empty-list}
 a \mapsto \begin{cases}
 	[a] & \text{if $a \in \Sigma$}\\
 	[] & \text{otherwise}
 \end{cases}
\end{align}
which is the disjoint union of the unit function (Example~\ref{ex:unit}) from $\Sigma$ and of the constant function which maps every element of $\Gamma$ to the empty list $[]$ of type $\Sigma^*$. Using map, we apply this function to all the elements of the input list, and then by applying the $\mathsf{flat}$ function, we obtain the desired result.
For example, if $\Sigma=\{a,b,c\}$ and $\Gamma=\{d,e\}$, consider 
the list $[a,c,d,e,b,e,d,a]$ in $(\Sigma+\Gamma)^*$. Using map 
of the above function to this list gives $[[a],[c],[],[],[b],[],[],[a]]$, which after using the function $\mathsf{flat}$, gives $[a,c,b,a]$.
Note that for the types to match, it is important to that the empty list in~\eqref{eq:unit-or-empty-list} is of type $\Sigma^*$.
\end{myexample}

\begin{myexample}[List comma function]
\label{ex:comma-function}
For $\Sigma,\Gamma$ in $\types$, consider the function: 
\begin{align*}
(\Sigma + \Gamma)^* \to \Sigma^{**}
\end{align*}
which groups elements from $\Sigma$ into lists, with elements of $\Gamma$ playing the role of list separators, as in the following example, where $\Sigma=\{a,b,c\}$ and $\Gamma=\{\#\}$:
\begin{align*}
[a,b,\#,c,\#,\#,a,\#,\#,\#,b,c,\#] \mapsto [[a,b],[c],{[]},[a],[],[],[b,c],[]]
\end{align*}
Let us prove that this function is a first order list function. We will freely use the identity function,  the disjoint union and the map combinators without necessarily mentioning it. 
\begin{enumerate}
\item 	We first apply the $\mathsf{block}$ function obtaining: 
\begin{align*}
[[a,b],[\#],[c],[\#,\#],[a],[\#,\#,\#],[b,c],[\#]]
\end{align*}
\item  Next, we  apply $\mathsf{tail}$ to the elements of $\Gamma^{*}$, obtaining: 
\begin{align*}
[[a,b],[],[c],[\#],[a],[\#,\#],[b,c],[]]
\end{align*}
Remark that the empty lists have type $\Gamma^*$.
\item This is followed by applying the list unit function (Example~\ref{ex:unit}) to the elements of $\Sigma^*$ obtaining type $\Sigma^{**}$. 
As a result, we have: 
\begin{align*}
[[[a,b]],[],[[c]],[\#],[[a]],[\#,\#],[[b,c]],[]]
\end{align*}
\item Next, we apply the list unit function (Example~\ref{ex:unit}) to all the elements in a list from $\Gamma^{*}$ (using two nested map) obtaining type $\Gamma^{**}$. This gives: 
\begin{align*}
[[[a,b]],[],[[c]],[[\#]],[[a]],[[\#],[\#]],[[b,c]],[]]
\end{align*}
Remark that now, the empty lists have type $\Gamma^{**}$.
\item Finally, we transform every list of $\Gamma^{**}$ into a list of $\Sigma^{**}$, by mapping every non empty list of $\Gamma^{*}$ to the empty list of $\Sigma^{*}$. The example gives:
\begin{align*}
[[[a,b]],[],[[c]],[[]],[[a]],[[],[]],[[b,c]],[]]
\end{align*}
Remark now that all the elements are of type $\Sigma^{**}$.   
\item Finally, if the first and last elements are the empty list of $\Sigma^{**}$, they are mapped to $[[]]$ of $\Sigma^{**}$, by using $\mathsf{head}$, $\mathsf{last}$, $\mathsf{append}$, $\mathsf{co-append}$, the pairing combinator and reversing the list. The example gives then: \begin{align*}
[[[a,b]],[],[[c]],[[]],[[a]],[[],[]],[[b,c]],[[]]]
\end{align*}
\end{enumerate}
It is now sufficient to use $\mathsf{flat}$ to obtain the desired result.
\end{myexample}

\begin{myexample}[Pair to list] 
\label{ex:pair-to-list}
We can convert a pair to list of length two as follows: use the pairing combinator on the two following functions:
\begin{align*}
& (x,y) \mapsto x && \text{ obtained using } \mathsf{projection}_1 \\
\text{and } & (x,y) \mapsto [y] && \text{ obtained by composition of } \mathsf{projection}_2 \\ &&& \text{ and the list unit function (Example~\ref{ex:unit})}
\end{align*}
in order to get $(x,[y])$. Then, $\mathsf{append}$ gives  $[x,y]$.

To get the converse translation, we use first $\mathsf{co-append}$ on $[x,y]$ to get $(x,[y])$. This is followed by  pairing  the two functions:
\begin{align*}
& (x,[y]) \mapsto x && \text{ obtained using } \mathsf{projection}_1 \\
\text{and } & (x,[y]) \mapsto y && \text{ obtained by composition of } \mathsf{projection}_2 \\ &&& \text{ and the function $\mathsf{head}$ (Example~\ref{ex:headtail})}
\end{align*}
to get $(x,y)$.

Note that for the second translation, the type of the output is 
  $(\Sigma \times (\Sigma + \bot)) + \bot$.
If we want the output type to be $\Sigma \times \Sigma$, then we can choose some element $c \in \Sigma$ and send the first $\bot$ to $c$ and the second $\bot$ to $(c,c)$. In this case, the resulting function will satisfy:
\begin{align*}
  [x_1,\ldots,x_n] \mapsto \begin{cases}
  	(c,c) & \text{for $n=0$}\\
  	(x_1,c) & \text{for $n=1$} \\
  	(x_1,x_2) & \text{otherwise}.
  \end{cases}
\end{align*}
\end{myexample}

\begin{myexample}[List concatenation]
\label{ex:list-concatenation}
List concatenation is a first-order list function: consider $([x_1,\ldots,x_n]$,$[y_1,\ldots,y_k])$, 
 a pair of lists and apply Example~\ref{ex:pair-to-list} to obtain $[[x_1,\ldots,x_n],[y_1,\ldots,y_k]]$
and then $\mathsf{flat}$ to get: $[x_1,\ldots,x_n,y_1,\ldots,y_k]$.
\end{myexample}

\begin{myexample}[Windows of size 2]
\label{ex:window}
For every $\Sigma$ in $\types$, the following is a first-order list function:
\begin{align*}
[x_1,\ldots,x_n] \in \Sigma^*  \mapsto  [(x_1,x_2),(x_2,x_3),\ldots,(x_{n-1},x_n)] {\in} (\Sigma \times \Sigma)^*
\end{align*}
(When the input has length at most 1, then the output is empty.) Let us show how to get the above function. Consider a list $[x_1,\ldots,x_n]$.
Using the same ideas as in Example~\ref{ex:unit}, the following function is a first-order list function:
\begin{align*}
  x \in \Sigma  \qquad \mapsto \qquad [x,\#,x] \in (\Sigma + \{\#\})^*.
\end{align*}
Apply the above function to every element of the input list, and then use $\mathsf{flat}$ on the result, yielding a list of type $(\Sigma + \{\#\})^*$ of the form:
\begin{align*}
  [x_1,\#,x_1,x_2,\#,x_2,\ldots,x_n,\#,x_n]
\end{align*}
Apply the list comma function from Example~\ref{ex:comma-function}, to get a list of the form:
\begin{align*}
  [[x_1],[x_1,x_2],[x_2,x_3],\ldots,[x_{n-1},x_n],[x_n]]
\end{align*}
Remove the first and last elements (using $\mathsf{head}$ and $\mathsf{last}$), yielding a list of the form:
\begin{align*}
  [[x_1,x_2],[x_2,x_3],\ldots,[x_{n-1},x_n]].
\end{align*}
Finally, apply the function from Example~\ref{ex:pair-to-list} to each element, yielding the desired list:
\begin{align*}
  [(x_1,x_2),(x_2,x_3),\ldots,(x_{n-1},x_n)].
\end{align*}
The ideas in this example can be extended to produce windows of size 3,4, etc.
\end{myexample}

\begin{myexample}[If then else]
\label{ex:if-then-else} 
Suppose that $f : \Sigma  \to \set{0,1}$ and $g_0,g_1 : \Sigma \to \Gamma$ are first-order list functions. Then $x \mapsto g_{f(x)}(x)$ is also a first-order list function. This is done as follows. On input $x \in \Sigma$, we first apply the pairing of $f$ and the identity function, yielding a result:
\begin{align*}
  (f(x),x) \in \set{0,1} \times \Sigma.
\end{align*}
Next we apply the function $\mathsf{distribute}$, transforming the type into:
\begin{align*}
  (f(x),x) \in \left(\set{0} \times \Sigma\right) + \left(\set{1} \times \Sigma\right)
\end{align*}
To this result we apply the disjoint union $h_0 + h_1$ where $h_i$ is defined by $(i,y) \mapsto g_i(y)$, yielding the desired result.
\end{myexample}

\begin{myexample}
Every function $f : \Sigma \to \Gamma$ can be lifted to a function $f^+ : \Sigma^+ \to \Gamma^+$ in the natural way, and  first-order list functions are  easily seen to be closed under this lifting by using the map and pairing combinators.
\end{myexample}

\section{Aperiodic rational functions}
\label{section:aperiodic}
The main  result of this section is  that  the class of first-order list functions contains all aperiodic  rational functions, see~\cite[Section IV.1]{DBLP:books/daglib/0023547} or Definition~\ref{def:aperiodic-rational-function} below. An important part of the proof is that first-order list functions can compute factorisations as in the Factorisation Forest Theorem of Imre Simon~\cite{Simon1990,Bojanczyk2009}. In Section~\ref{sec:simon}, we state that the Factorisation Forest Theorem can be made effective using first-order list functions, and in Section~\ref{sec:sequential}, we define aperiodic rational functions and prove that they are first-order list functions. 

\subsection{Computing factorisations}
\label{sec:simon}

In this section we state the Factorisation Forest Theorem and show how it  can be made effective using first-order list functions. We begin by defining monoids and semigroups. For our application, it will be convenient to use a definition where  the product operation is not  binary, but has unlimited arity. (This is the view of monoids and semigroups as Eilenberg-Moore algebras over monads $\Sigma^*$ and $\Sigma^+$, respectively).

\begin{definition}
\label{def:monoid}
A \emph{monoid} consists of a set $M$ and a product operation $\pi : M^* \to M$ which is associative, ie for all elements $m_1, \ldots m_k$ of $M$ and for all $1 \leq \ell_1 < \ell_2 < \dots < \ell_j < k$:
\begin{align*}
\pi(m_1, \ldots, m_k) = \pi(\pi(m_1, \ldots, m_{\ell_1}),\pi(m_{\ell_1+1}, \ldots, m_{\ell_2}),\ldots,\pi(m_{\ell_j+1}, \ldots, m_k))
\end{align*}
\end{definition}
	
Remark that by definition the empty list of $M^*$ is sent by $\pi$ to a neutral element in $M$. 
A \emph{semigroup} is defined the same way, except that nonempty lists $M^+$  are used instead of possibly empty ones. 

\begin{definition}
A monoid (or semigroup) is called \emph{aperiodic} if there exists some positive integer $n$ such that 
$m^n = m^{n+1}$ for every element $m \in M$
where $m^n$ denotes the $n$-fold product of $m$ with itself.		
\end{definition}

A \emph{semigroup homomorphism} is a function between two semigroups which is compatible with the semigroup product operation.

\paragraph*{Factorisations.}
Let $h : \Sigma^+ \to S$ be a semigroup homomorphism (equivalently, $h$ can be given as a function $\Sigma \to S$ and extended uniquely into a homomorphism). An \emph{$h$-factorisation} is defined to be a sibling-ordered tree which satisfies the following constraints (depicted in the following picture):
leaves are labelled by elements of $\Sigma$ and have no siblings. All the other nodes are labelled by elements from $S$. The parent of a leaf labelled by $a$ is labelled by $h(a)$. The other nodes have at least two children and are labelled by the product of the child labels. If a node has at least three children then those children have all the same label. 

\mypicsmall{3}

\paragraph*{Computing factorisations using first-order list functions.} As described above, an $h$-factorisation is a special case of a tree  where leaves have labels in $\Sigma$ and non-leaves have labels in $S$. Objects of this type, assuming that there is some bound $k$ on the depth, can be represented using our type system:
\begin{eqnarray*}
  \trees_0(\Sigma, S) &=& \Sigma \\
   \trees_{k+1}(\Sigma,S)&=& \trees_k(\Sigma,S) +   S \times (\trees_k(\Sigma,S))^+
\end{eqnarray*}
Using the above representation, it is meaningful to talk about a first-order list function computing an $h$-factorisation of depth bounded by some constant $k$. This is the representation used in the following theorem. The theorem is essentially the same as the Factorisation Forest Theorem (in the aperiodic case), except that it additionally says that the factorisations can be produced using first-order list functions. 

\begin{theorem}
\label{thm:simon-compute}
Let $\Sigma \in \types$ be a (not necessarily finite) type and let $h : \Sigma \to S$ be a function into the universe of some finite aperiodic semigroup $S$. If $h$ is a first-order list function then there is some $k \in \Nat$ and a first-order list function:
\begin{align*}
  f : \Sigma^+ \to \trees_k (\Sigma,S)
\end{align*}
 such that for every $w \in \Sigma^+$, $f(w)$ is an $h$-factorisation whose yield (i.e.~the sequence of leaves read from left to right) is $w$.
\end{theorem}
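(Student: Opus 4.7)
The plan is to prove Theorem~\ref{thm:simon-compute} by induction on $|S|$, constructively mirroring the classical inductive proof of Simon's Factorisation Forest Theorem for aperiodic semigroups, and verifying at each step that the combinatorial construction can be carried out by a first-order list function.

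As a preliminary reduction, I would use the map and pairing combinators to transform the input $w = [a_1, \ldots, a_n]$ into the list of labelled pairs $[(a_1, h(a_1)), \ldots, (a_n, h(a_n))]$. This makes the semigroup values explicitly accessible while preserving the original $\Sigma$-letters for the leaves of the final tree, so that all subsequent manipulations can work on the finite-alphabet $S$-coordinate while the $\Sigma$-coordinate rides along by pairing. The base case $|S|=1$ is then immediate: by aperiodicity the unique element $s$ is idempotent, and a depth-$2$ factorisation is obtained by wrapping each letter via the list unit function (Example~\ref{ex:unit}) to build its $h(a_i)$-labelled parent, then gathering all these parents as identically labelled children of a single root (valid by the $\geq 3$-children rule).

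For the inductive step with $|S|\geq 2$, I would choose an idempotent $e \in S$ and consider the principal two-sided ideal $I = SeS$. Using the fact that arbitrary functions on finite sets are first-order list functions (Example~\ref{ex:finite-functions}), I would flag each position by whether its $S$-image lies in $I$, then apply the $\mathsf{block}$ function to group consecutive positions of the same flag. For runs whose positions all lie outside $I$, the relevant labels lie in a strictly smaller effective semigroup (e.g., the Rees quotient $S/I$), so the induction hypothesis directly yields a sub-factorisation. For runs inside $I$, a secondary recursion exploits the idempotency of $e$: long maximal $e$-blocks collapse into single nodes with identically labelled children, and the remaining positions in $I \setminus \{e\}$ again lie in a smaller semigroup. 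The sub-factorisations are then recombined using the pairing, disjoint union, and append combinators, together with the auxiliary list operations developed in Section~\ref{section:examples}.

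The principal obstacle is ensuring bounded depth despite the fact that the number of alternating runs may be linear in $n$: a naive top-down combination would produce a tree of depth proportional to that number rather than to $|S|$. The classical remedy, which fits the first-order list function framework, is to iterate the whole procedure on the sequence of run-summaries (each summary being the product of its block, a single element of $S$), since this sequence lives over an effectively smaller semigroup after the idempotent $e$ has been absorbed. Because the induction bounds the number of such iterations by a function of $|S|$ alone, and each iteration is a composition of $\mathsf{block}$, $\mathsf{map}$, pairing, and the recursively constructed factoriser (all first-order list functions), the overall depth $k$ of the output tree is a constant depending only on $|S|$. Verifying that every step of the classical ideal decomposition admits such a first-order list implementation, rather than requiring auxiliary counting or unbounded iteration, is the main technical challenge.
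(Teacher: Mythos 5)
Your overall strategy---mirroring the inductive proof of Simon's theorem and checking that each combinatorial step is implementable with $\mathsf{map}$, $\mathsf{block}$, windows and filtering---is the right one, and your preliminary step of carrying the pair $(a_i,h(a_i))$ along is exactly in the spirit of the paper. However, the core semigroup-theoretic induction you propose has a genuine gap. You induct on $|S|$ alone and split the word according to membership of $h(a_i)$ in the two-sided ideal $I=SeS$. First, this makes no progress when $SeS=S$ for every idempotent $e$, which happens precisely for the simple aperiodic semigroups (rectangular bands) and, more relevantly, whenever the letters of the word all generate the minimal ideal: then every position is flagged ``inside $I$'', there is a single run, and neither the Rees quotient nor the set $I\setminus\{e\}$ (which is not a subsemigroup) gives you a smaller semigroup to recurse on. Second, and more fundamentally, your fix for the depth problem---iterating on the sequence of run-summaries because it ``lives over an effectively smaller semigroup''---is unjustified: the product of a run of letters outside $I$ need not lie outside $I$, and the product of a run inside $I$ lies in $I$, which may be all of $S$. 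Without a proof that each recombined block has its value in a \emph{proper} subsemigroup, the recursion on run-summaries does not terminate in a number of rounds bounded by $|S|$, and the depth bound $k$ collapses.

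The paper's proof circumvents exactly these two problems. It inducts on the lexicographically ordered pair (size of $S$, size of $h(\Sigma)$), which is needed because one branch of the recursion shrinks only the generating set, not the semigroup. And instead of a two-sided ideal it picks $s\in h(\Sigma)$ with $Ss\subsetneq S$ (or $sS\subsetneq S$), colours the $s$-positions red and the rest blue, applies $\mathsf{block}$, and---this is the key combinatorial trick---pairs each blue block with the \emph{following} red block, so that every combined factor provably has value in the proper subsemigroup $T=Ss$; the outer list of these factors is then handled by the induction on $|S|$. The degenerate case where no such $s$ exists is dispatched by an aperiodicity argument showing $h(\Sigma)$ is then a single idempotent, i.e.\ the base case $|h(\Sigma)|=1$ (not $|S|=1$). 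To repair your proposal you would either need to adopt this one-sided decomposition, or supply the missing Green's-relations analysis of the case $SeS=S$ together with a proof that your run-summaries genuinely descend in the semigroup order.
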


Before giving the proof of this theorem, let us give a corollary.

\begin{corollary}
\label{cor:products} 
Let $\Sigma \in \types$ be finite. Then the following functions are first-order list functions:
\begin{enumerate}
	\item  every semigroup homomorphism $h : \Sigma^+ \to S$ where  $S$ is finite aperiodic;
	\item every regular language over $\Sigma$, viewed as a function $\Sigma^* \to \set{0,1}$.
\end{enumerate}
\end{corollary}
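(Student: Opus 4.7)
The plan is to derive both parts of the corollary directly from Theorem~\ref{thm:simon-compute}.

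For part~1, let $h : \Sigma^+ \to S$ be a semigroup homomorphism with $\Sigma$ and $S$ both finite and $S$ aperiodic. First I would observe that the restriction $h|_\Sigma : \Sigma \to S$ is a first-order list function by Example~\ref{ex:finite-functions}, since $\Sigma$ is finite. Applying Theorem~\ref{thm:simon-compute} to this restriction produces, for some $k \in \Nat$, a first-order list function $F : \Sigma^+ \to \trees_k(\Sigma, S)$ whose output on $w$ is an $h$-factorisation of $w$. The key observation is that $h(w)$ equals the label at the root of any $h$-factorisation of $w$, where for a singleton factorisation $a \in \Sigma$ the ``root label'' is understood as $h(a)$. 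Hence $h$ is obtained by composing $F$ with a root-extraction function $\textsf{root}_k : \trees_k(\Sigma, S) \to S$.

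I would construct $\textsf{root}_k$ by induction on $k$. At the base case $\trees_0(\Sigma, S) = \Sigma$, take $\textsf{root}_0 = h|_\Sigma$. For the inductive step, the type decomposes as $\trees_{k+1}(\Sigma, S) = \trees_k(\Sigma, S) + S \times \trees_k(\Sigma, S)^+$, so $\textsf{root}_{k+1}$ is the disjoint union of $\textsf{root}_k$ (on the first summand) with the first projection $\mathsf{projection}_1$ (on the second summand, which reads off the stored root label $s \in S$). This is a first-order list function by the combinators of Figure~\ref{fig:combinators}, and composing with $F$ yields $h$.

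For part~2, reading ``regular'' in this aperiodic section as ``recognisable by a finite aperiodic monoid'' (equivalently, star-free or first-order definable), let $\hat h : \Sigma^* \to M$ be such a recognising morphism with accepting set $G \subseteq M$. I would handle the empty word separately: applying $\mathsf{co-append}$ splits the input into $(\Sigma \times \Sigma^*) + \bot = \Sigma^+ + \bot$, using the paper's convention $\Sigma^+ = \Sigma \times \Sigma^*$. On the $\bot$ branch, output the constant indicating whether $\varepsilon \in L$; on the $\Sigma^+$ branch, apply part~1 to the semigroup homomorphism $\hat h|_{\Sigma^+}$ and compose with the characteristic function of $G$, which is a map on the finite set $M$ and hence first-order list by Example~\ref{ex:finite-functions}. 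The two branches are then merged by the disjoint union combinator.

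The only substantive step is the root-label extraction in part~1, and it is a routine induction once the tree encoding $\trees_k$ is unfolded; the remainder is combinator plumbing around a single appeal to Theorem~\ref{thm:simon-compute}.
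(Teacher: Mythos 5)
Your proof is correct and follows essentially the same route as the paper's: for item~1, observe that $h|_\Sigma$ is a first-order list function since $\Sigma$ is finite, apply Theorem~\ref{thm:simon-compute}, and output the root label; for item~2, compose with the characteristic function of the accepting set and handle the empty list separately. Your extra details (the inductive construction of $\mathsf{root}_k$, the use of $\mathsf{co\text{-}append}$ and disjoint union for the empty-word case where the paper invokes Examples~\ref{ex:if-then-else} and~\ref{ex:length}, and the explicit reading of ``regular'' as ``recognised by a finite aperiodic monoid'') are only elaborations of steps the paper leaves implicit.
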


\begin{pr}[of the corollary~\ref{cor:products}]
For item 1, because $\Sigma$ is finite, $h : \Sigma \to S$ is a first-order list function. We can then use Theorem~\ref{thm:simon-compute}, compute a $h$-factorisation with a first-order list function and then output its root label.	 
For item 2, take some semigroup homomorphism which recognises the language, apply item 1, and compose with the characteristic function of the accepting set. That is, consider the homomorphism $h: \Sigma^+ \rightarrow S$ recognizing $L \subseteq \Sigma^+$ and $P=h(L)$.
Compose this with $g: S \rightarrow \{0,1\}$ which maps exactly the elements of $P$ to 1 (which is a first order list function because $S$ is finite). 
Since the composition of two first-order  list functions is a first-order list function, we 
obtain item 2. However,  in item 2, we need to treat  separately the case of the empty list on input, but this can be done using Examples~\ref{ex:if-then-else} and~\ref{ex:length}.
\end{pr}

\begin{pr}[of Theorem~\ref{thm:simon-compute}]
Suppose that $s_1,\ldots,s_n$ are elements of $S$. In the proof below, we adopt the notational convention that $[s_1,\ldots,s_n] \in S^+$ represents the list of these elements, while $s_1 \cdots s_n \in S$ represents their product. In particular, $st$ denotes the element  of $S$ which is the product of two elements $s$ and $t$.

The proof of the theorem is by induction on the following parameters: (a) the size of $S$; and (b) the size of the image $h(\Sigma)$.  These parameters are ordered lexicographically, i.e.~we can call the induction assumption for a smaller semigroup even if the size of $h(\Sigma)$ grows.
	
\begin{enumerate}
\item Consider first the induction base, when the set $h(\Sigma)$ contains only one element, call it $s$. First, using the function from Example~\ref{ex:length}, we show that 
\begin{align*}
  [a_1,\ldots,a_n] \in \Sigma^+ \quad  \mapsto  \quad \overbrace{s \cdots s}^{\text{$n$ times}} \in S
\end{align*}
are first-order list functions. The key observation is that, since $S$ is aperiodic, the above function is constant for lists whose length exceeds some threshold. Pairing the above function with:
\begin{align*}
  [a_1,\ldots,a_n] \in \Sigma^+ \mapsto  [(s,a_1),\ldots,(s,a_n)] \in (S \times \Sigma)^+
\end{align*}
we get the conclusion of the theorem. 

\item Suppose that there is some $s \in h(\Sigma)$ such that:
\begin{align*}
T =   \set { ts : t \in S} \varsubsetneq S
\end{align*}
is a proper subset of $S$.  Note that $T$ is a subsemigroup of $S$. 
Consider  two copies of $\Sigma$, $\redSigma$ the red copy and $\blueSigma$ the blue one, and the function:
 $f : \Sigma \to \redSigma + \blueSigma$
which colours red those elements of $\Sigma$ which are mapped with $s$, and colours blue the remaining ones (formally speaking, the range of $f$ is a co-product of two copies of $\Sigma$). This is a first-order list function, by using the if-then-else construction described in Example~\ref{ex:if-then-else}. To an input list in $\Sigma^*$, apply $f$ to all the elements of the list (with map), and then apply the $\mathsf{block}$ function, yielding a list:
 $x \in  (\redSigma^* + \blueSigma^*)^*$. 
Assume first that $x$ begins with a blue list and ends with a red list and has the form 
\begin{align*}
&  [\bluez 1,\redy 1,\bluez 2,\redy 2,\ldots,\bluez n,\redy n], \\
&  \bluez 1,\ldots,\bluez n \in \blueSigma^+, \redy 1,\ldots,\redy n \in \redSigma^+
\end{align*}
Using the window function from Example~\ref{ex:window} and discarding pairs that are of type $\redSigma^+ \times \blueSigma^+$ with the filtering function from Example~\ref{ex:filter}, we can transform the above list into one of the form:
\begin{align*}
  [(\bluez 1,\redy 1),(\bluez 2,\redy 2),\ldots,(\bluez n,\redy n)].
\end{align*}
To both $\redSigma^*$ and $\blueSigma^*$ we can apply the induction assumption on the number of generators. Therefore, using the induction assumption, co-product and map, we can transform the above list into a list of $h$-factorisations: 
\begin{align*}
  [u_1,u_2,\ldots,u_n]
\end{align*}
such that each $u_i$ is an $h$-factorisation of $\bluez i \redy i$. The key observation is that, since $\redy i$ is a nonempty list of elements with value $s$, it follows that the value of $u_i$ in the semigroup belongs to the set $T$, which is a smaller semigroup than $S$. Therefore, we can apply the induction assumption again, to transform the above list into an $h$-factorisation.
We can treat similarly the cases where $x$ does not begin with a blue list or does not end with a red list.

\item If there is some $s \in h(\Sigma)$ such that
$T =   \set { st : t \in S}$ 
is a proper subset of $S$, then we proceed analogously as in the previous case.

\item We claim that one of the above three cases must hold. Indeed, if neither case 2 nor 3 holds,  then the functions:
\begin{align*}
  f_s: t \mapsto st \qquad \text{and} \qquad g_s: t \mapsto ts
\end{align*}
are permutations of $S$ for all $s$. In particular, by the assumption that $S$ is aperiodic, we deduce that $f_s$ and $g_s$ are the identity on the semigroup generated by $s$ and $s^2 = s$. Thus for all $t$, $st=s^2t$ and then necessarily $f_s$ is the identity over $S$. The same thing holds for $g_s$. Therefore for every $s,t \in h(\Sigma)$ we have $t =   st = s$. 
This means we are in case 1.
\end{enumerate}
\end{pr}

\subsection{Rational functions}
\label{sec:sequential}
For the purposes of this paper, it will be convenient to give an algebraic representation for rational functions. In this section, we will only be interested in the case of aperiodic ones; however we explain in section~\ref{section:regular} that our results can be generalised to arbitrary rational functions.

\begin{definition}[Rational function]
\label{def:aperiodic-rational-function}
	The syntax of a \emph{rational function} is given by:
	\begin{itemize}
		\item input and output alphabets $\Sigma,\Gamma$, which are both finite;
		\item a monoid homomorphism $h : \Sigma^* \to M$ with $M$ a finite monoid;
		\item an output function $out : M \times \Sigma \times M \to \Gamma^*$.
	\end{itemize}
		If the monoid $M$ is aperiodic, then the rational function is also called \emph{aperiodic}.
	The semantics is the function:
\begin{align*}
  a_1 \cdots a_n \in \Sigma^*  \quad \mapsto \quad w_1 \cdots w_n  \in \Gamma^*
\end{align*}
where $w_i$ is defined to be the value of the output function on the triple:
\begin{enumerate}
	\item value under $h$ of the prefix  $a_1 \cdots a_{i-1}$
	\item letter $a_i$
	\item value under $h$ of the suffix $a_{i+1} \cdots a_n$.
\end{enumerate}
\end{definition}

Note that in particular, the empty input word is mapped to an empty output. 

\begin{theorem}\label{thm:sequential-functions}
Every aperiodic rational function  is a first-order list function.
\end{theorem}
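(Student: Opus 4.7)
The plan is to reduce the task to producing, for a nonempty input $[a_1,\ldots,a_n]$, the annotated list
$$[(h(\varepsilon),a_1,h(a_2\cdots a_n)),\,(h(a_1),a_2,h(a_3\cdots a_n)),\,\ldots,\,(h(a_1\cdots a_{n-1}),a_n,h(\varepsilon))]\in (M\times\Sigma\times M)^+.$$
Once this list is obtained, applying $out$ elementwise (a function between finite sets, first-order by Example~\ref{ex:finite-functions}) via the map combinator and then $\mathsf{flat}$ yields $w_1\cdots w_n$, as required. The empty input is handled separately via the if-then-else construction of Example~\ref{ex:if-then-else} combined with the threshold-length test $\mathsf{len}_1$ of Example~\ref{ex:length}.

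To build the annotated list, I would use Theorem~\ref{thm:simon-compute}: viewing $M$ as an aperiodic semigroup and $h\colon\Sigma\to M$ as a first-order list function (finite-to-finite, hence first-order by Example~\ref{ex:finite-functions}), we obtain a first-order list function producing an $h$-factorisation tree $T\in\trees_k(\Sigma,M)$ of bounded depth $k$. The next step is a recursive descent that propagates a left context $l\in M$ and right context $r\in M$ through $T$. Formally, I would define by induction on $k$ a first-order list function $\mathsf{annotate}_k\colon M\times\trees_k(\Sigma,M)\times M\to\trees_k(M\times\Sigma\times M,M)$ which, given $(l,T,r)$, rewrites every leaf $a$ of $T$ to $(l',a,r')$ where $l'$ (resp.\ $r'$) is the product in $M$ of $l$ (resp.\ of $r$) with the labels of the subtrees strictly to the left (resp.\ right) of $a$. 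At a leaf, $\mathsf{annotate}_0(l,a,r)=(l,a,r)$. At an internal node with label $s$ and children $[T_1,\ldots,T_m]$, the procedure extracts the sequence of root labels $[s_1,\ldots,s_m]\in M^+$ (using disjoint union: apply $h$ if the child is a leaf, or $\mathsf{projection}_1$ if it is an internal node), forms the left-context list $[l,\,ls_1,\,ls_1s_2,\,\ldots,\,ls_1\cdots s_{m-1}]$ and analogously the right-context list, pairs each $T_i$ with its two contexts, and invokes $\mathsf{annotate}_{k-1}$ via map. Reading off the yield of the resulting annotated tree (a first-order operation using map and $\mathsf{flat}$) produces the annotated list.

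The main obstacle is the running-product subroutine, because the number $m$ of children at an internal node is unbounded. Here I would exploit the structure of aperiodic factorisation forests: each internal node has either exactly two children (in which case the two required context lists are $[l,ls_1]$ and $[s_2 r,r]$, an immediate manipulation via pairs and finite-set functions) or has $\geq 3$ children all bearing the same label $s$. In the latter case, by aperiodicity there is a threshold $N$, depending only on $M$, with $s^N=s^{N+1}$, so the desired list of prefix products is
$$[e,\,s,\,s^2,\,\ldots,\,s^{N-1},\,s^N,\,s^N,\,\ldots,\,s^N].$$
I would compute it as follows: iterate $\mathsf{co\text{-}append}$ a constant number $N$ of times to detach the first $N$ children from the rest (each iteration yielding either a pair or $\bot$, handled by disjoint union); tag those first children with $e,s,s^2,\ldots,s^{N-1}$ using a finite-set function; tag every remaining child with the constant $s^N$ via map and a constant function; and re-concatenate with the list-concatenation function of Example~\ref{ex:list-concatenation}. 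The right-context list is obtained symmetrically using $\mathsf{reverse}$. Since $N$ and $k$ are constants depending only on $M$, the whole construction is a finite composition of first-order list functions.

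Putting it all together, applying $\mathsf{annotate}_k$ to $(e,T,e)$, extracting the yield, mapping $out$, and flattening gives the rational function as a first-order list function.
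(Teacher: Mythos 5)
Your proposal is correct and follows essentially the same route as the paper: compute an $h$-factorisation via Theorem~\ref{thm:simon-compute}, exploit aperiodicity to handle nodes of unbounded degree (all children share a label $s$, so the running products stabilise at $s^N$ and the first $N$ children can be peeled off with a constant number of $\mathsf{co-append}$/$\mathsf{head}$/$\mathsf{last}$ applications), annotate each leaf with its (prefix, letter, suffix) triple, and finish with map of $out$ followed by $\mathsf{flat}$. The only difference is organisational: the paper factors the annotation step through two intermediate lemmas (replace each non-leaf label by its sibling profile, then collect the bounded-length ancestor sequences at the leaves, then multiply them out), whereas you thread the accumulated left/right contexts through a single top-down recursion on the bounded depth $k$; your treatment of nodes of degree at least $3$ is exactly the paper's Claim~\ref{claim:degreethree}.
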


The rest of Section~\ref{sec:sequential} is devoted to showing the above theorem. The general idea is to use factorisations as in Theorem~\ref{thm:simon-compute} to compute the rational function.
 
\paragraph*{Sibling profiles.}
Let $h : \Sigma^* \to M$ be a homomorphism into some finite aperiodic monoid $M$. Consider an $h$-factorisation, as defined in Section~\ref{sec:simon}. For a non-leaf node $x$ in the $h$-factorisation, define its \emph{sibling profile} (see Figure~\ref{fig:siblings}) to be the pair $(s,t)$ where $s$ is the product in the monoid of the labels in the left siblings of $x$, and $t$ is the product in the monoid of the labels in the right siblings. If $x$ has no left siblings, then $s=1$, if $x$ has no right siblings then $t=1$ (where $1$ denotes the neutral element of the monoid $M$).

\begin{figure}[hbt]
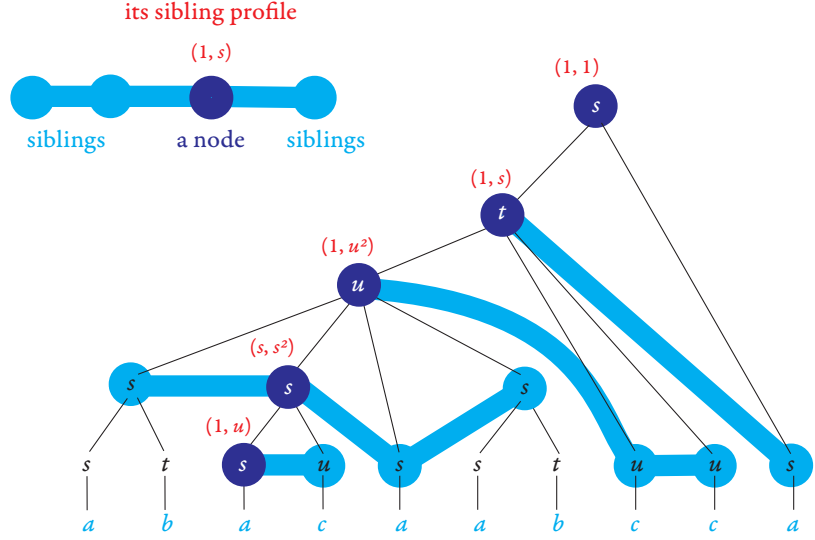

\mypic{8}
  \caption{\label{fig:siblings}Sibling profiles.}
\end{figure}

The two following lemmas give transformations on trees (which are $h$-factorisation) that are first-order list functions.

\begin{lemma}
\label{lem:sibling-profiles}
Let $k \in \Nat$ and $h$ be a homomorphism $\Sigma^* \to M$. There is a first-order list function
  $\trees_k (M,\Sigma) \to \trees_k (M \times M, \Sigma)$
which transforms any $h$-factorisation 
by replacing the label of each non-leaf node with its sibling profile.
\end{lemma}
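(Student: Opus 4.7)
The proof proceeds by induction on the depth bound $k$. For $k = 0$, the type $\trees_0(\Sigma, M) = \Sigma$ contains only single leaves and has no non-leaf nodes to relabel, so the identity function (Example~\ref{ex:identity}) is the required first-order list function.

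For the inductive step, I use the disjoint-union combinator on the two summands of $\trees_{k+1}(\Sigma, M) = \trees_k(\Sigma, M) + M \times \trees_k(\Sigma, M)^+$. On the first summand the induction hypothesis applies directly. On the second summand, given an input $(m, [c_1, \ldots, c_n])$, I: (a)~apply the induction hypothesis to each child via the map combinator, producing trees whose deeper sibling profiles are already correct but whose roots carry a ``local'' label $(1, 1)$; (b)~extract from the original children the list $[m_1, \ldots, m_n] \in M^+$ of top-level $M$-labels, dispatching by disjoint union on the leaf versus internal-node case (for a leaf apply $h$, for an internal node apply the projection); (c)~compute the list $[(s_1, t_1), \ldots, (s_n, t_n)]$ of sibling profiles with $s_i = m_1 \cdots m_{i-1}$ and $t_i = m_{i+1} \cdots m_n$; and (d)~pair up and use co-append/append to overwrite the top-level label of each recursively processed child by the corresponding $(s_i, t_i)$, and finally install $(1, 1)$ as the new root label.

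The decisive subroutine is step~(c): computing all prefix and suffix products of a list over the finite aperiodic monoid $M$ as a first-order list function. To this end I apply Theorem~\ref{thm:simon-compute} to the identity $M \to M$ (a first-order list function since $M$ is finite), obtaining a factorisation tree of depth bounded by some constant $k_M$ depending only on $M$. A secondary induction on $k_M$ then walks this factorisation and annotates each leaf with its ambient left and right context: at a binary node the context is split using the products already stored in the child labels; at a node with $\geq 3$ identical children labelled $p$, the $i$-th child inherits the left context multiplied by $p^{i-1}$ and the right context multiplied by $p^{j-i}$, and aperiodicity reduces these powers to finitely many cases, distinguishable by Example~\ref{ex:length}. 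Distributing the contexts to the children uses only map, pairing, distribute, and disjoint union.

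The main obstacle is precisely this prefix-and-suffix-products subroutine: conceptually it is the familiar ``divide-and-conquer over a Simon forest'', but formalising it as a combinator expression requires the secondary induction on the factorisation depth, the appeal to aperiodicity to bound exponents, and careful bookkeeping over the co-product structure that distinguishes leaves from internal nodes. Once this subroutine is available, the outer induction on $k$ is an uneventful assembly of existing combinators.
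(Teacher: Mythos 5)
Your outer induction on $k$ is the same as the paper's, and your base case and the overwrite-the-root-label step match what the paper does. Where you genuinely diverge is in handling a node with children $c_1,\ldots,c_n$: you treat the top-level labels $[m_1,\ldots,m_n]$ as an arbitrary list over $M$ and build a general prefix-and-suffix-products subroutine by a nested appeal to Theorem~\ref{thm:simon-compute} plus a secondary induction over that inner factorisation. The paper instead exploits a structural property of $h$-factorisations that you did not use: a node with at least three children has all children carrying the \emph{same} label $s$, so the sibling profile of the $i$-th of $n$ children is simply $(s^{i-1},s^{n-i})$; by aperiodicity these powers stabilise past some threshold $n_0$, and the whole computation reduces to the elementary Claim~\ref{claim:degreethree} (peel off $2n_0$ boundary elements with $\mathsf{tail}$, $\mathsf{reverse}$, $\mathsf{head}$, $\mathsf{last}$, and label everything in the middle with the constant $(s^{n_0},s^{n_0})$). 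Nodes with at most two children are handled by hand. So the paper's inner step is a few lines, whereas yours re-derives, in effect, the content of Lemma~\ref{lem:sibling-profiles} together with Lemma~\ref{lem:ancestor-labels-recursive} and step 4 of the proof of Theorem~\ref{thm:sequential-functions} as a self-contained subroutine. Your route is not circular (you do not invoke the later results, and the top-down propagation of contexts can be implemented because there are only $|M|^2$ possible contexts and both tree depths are bounded, so a finite family of functions dispatched by if-then-else suffices), and it is more general in that it would apply to trees without the equal-label property; but for the lemma as stated it is considerable overkill, and the "main obstacle" you identify dissolves once you notice the equal-label property of high-degree nodes in an $h$-factorisation. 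The one point you should make explicit if you keep your version is how the leaf-annotated inner factorisation is flattened back into the left-to-right list of pairs $(s_i,t_i)$ aligned with the original children; the bounded depth makes this routine, but it is not covered by "map, pairing, distribute, and disjoint union" alone.
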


\begin{pr}
We prove the lemma by induction on $k$. We use the following claim to deal with nodes of degree at least 3 in the induction step.
\begin{claim}\label{claim:degreethree}
Let  $\Delta \in \types$ and let $s \in M$. The function which maps a list $[x_1,\ldots,x_n] \in \Delta^*$ to: 
\begin{align*}
& [((s^0,s^{n-1}),x_1),\ldots,((s^{i-1},s^{n-i}),x_i),\ldots,((s^{n-1},s^0),x_n)] \in (M \times \Delta \times M)^*
\end{align*} 
is a first-order list function.
\end{claim}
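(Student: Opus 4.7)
The plan is to exploit aperiodicity of $M$ to reduce the problem to annotating each position with its distance from each end of the list, capped at a constant threshold.

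\textbf{Setup via aperiodicity.} Since $M$ is finite and aperiodic, fix $N \in \Nat$ with $s^N = s^{N+1}$, so that $s^k = s^{\min(k, N)}$ for every $k \geq 0$. Hence it suffices to produce, from $[x_1, \ldots, x_n] \in \Delta^*$, the list whose $i$-th entry is $(\min(i-1, N), \min(n-i, N), x_i) \in \{0, \ldots, N\}^2 \times \Delta$, and then apply the finite-set function $(a, b) \mapsto (s^a, s^b)$ (first-order by Example~\ref{ex:finite-functions}) componentwise via map, followed by a trivial reassociation to reach the claimed output type.

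\textbf{Left-capped annotation by iterated peeling.} For each $k \in \{0, \ldots, N\}$ I construct a first-order list function $\mathsf{ann}_k : \Delta^* \to (\{0, \ldots, N\} \times \Delta)^*$ that tags position $i$ with $\min(i-1+k, N)$, by descending induction on $k$. The case $k = N$ is simply the map of $x \mapsto (N, x)$, built by pairing a constant with the identity. For $k < N$, I compose $\mathsf{co-append}$ with a disjoint-union split: on the $\bot$ branch return the empty list; on the $(x_1, t)$ branch, pair the constant $k$ with $x_1$ to form $(k, x_1)$, apply $\mathsf{ann}_{k+1}$ to $t$ by composition, and prepend $(k, x_1)$ via $\mathsf{append}$. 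Every piece is a first-order list function, so after $N$ unfoldings $\mathsf{ann}_0$ is first-order. By construction, $\mathsf{ann}_0$ attaches to each $x_i$ the value $\min(i-1, N)$.

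\textbf{Right annotation and main obstacle.} To attach the symmetric tag $\min(n-i, N)$, I apply $\mathsf{reverse}$, then $\mathsf{ann}_0$, then $\mathsf{reverse}$ again to the list already carrying the left tags: in the reversed list the distance from the head equals the original distance from the tail, so after reversing back each $x_i$ carries both tags. A final map with $(a, b, x) \mapsto ((s^a, s^b), x)$ yields the claimed list. The only delicate step is the construction of $\mathsf{ann}_0$; parameterising the recursion by the offset $k$ is essential, because the combinator language has no unbounded iteration, and a naive one-position-at-a-time recursion would not terminate after a bounded syntactic depth. Combining the two annotations via $\mathsf{reverse}$ and applying the finite function on tags are then routine.
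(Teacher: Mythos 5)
Your proof is correct and rests on the same key idea as the paper's: aperiodicity lets you cap the exponents at a fixed threshold, so only the first and last constantly many positions need individual treatment while the middle gets a constant tag. The paper implements this by extracting the middle with iterated $\mathsf{tail}$/$\mathsf{reverse}$ and handling the $\le 2n_0$ boundary elements via $\mathsf{head}$ and $\mathsf{last}$, whereas you organize the same peeling as a bounded family of functions $\mathsf{ann}_k$ unfolded via $\mathsf{co\text{-}append}$; this is an implementation detail, not a different argument.
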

\begin{pr}
Since $M$ is aperiodic, there is some $n_0$ such that all powers $s^n$ with $n > n_0$ are the same. We use $\mathsf{tail}$ ($n_0$ times), $\mathsf{reverse}$ and again $\mathsf{tail}$ ($n_0$ times) to extract the list consisting of elements that are at distance at least $n_0$ to both the beginning and end of the list, and apply the function $x \mapsto ((s^{n_0},s^{n_0}),x)$ to all those elements. We treat one by one the remaining elements, i.e.~those at distance at most $n_0$ from either the beginning or end of the list, because there is at most $2n_0$ such elements, and we can extract them using $\mathsf{head}$ and $\mathsf{last}$ at most $n_0$ times.
\end{pr}

We can now give the inductive proof. For $k=0$, the identity function, which is  a first-order list function satisfies the conditions. Let $k>0$. If the root has degree at most $2$, let us write $t_1$ and $t_2$ for its two subtrees and $s_1$ and $s_2$ for the label of their respective roots. By induction, there exist a first-order list function transforming $t_i$ into $t'_i$, for $i=1,2$, where each node (except the root) is replaced by its sibling profile. One can compose it with a first-order list function which replaces the node corresponding to $s_1$ by $(1,s_2)$ and $s_2$ by $(s_1,1)$.
If the root has degree at least $3$, the reasoning is similar. Let us write $t_1,\ldots, t_n$ for the subtrees and $s$ for the label of their respective roots. By induction, there exist a first-order list function transforming $t_i$ into $t'_i$, for $i=1,\ldots,n$, where each node (except the root) is replaced by its sibling profile. We can now use the function from Claim~\ref{claim:degreethree}, to replace the label of the roots of the subtrees by their sibling profile.
\end{pr}

\begin{lemma}
\label{lem:ancestor-labels-recursive}
Let $k \in \Nat$ and let $\Delta$ be a finite set. Then there is a first-order list function: 
  $\trees_k (\Delta,\Sigma) \to ( \Delta^* \times \Sigma)^*$
which inputs a tree and outputs the following list: for each leaf (in left-to-right order) output the label of the leaf plus the sequence of labels in its ancestors listed in increasing order of depth.
\end{lemma}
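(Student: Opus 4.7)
The plan is to prove the lemma by induction on $k$. For the base case $k=0$, the tree is a single leaf $x \in \Sigma$ with no ancestors, so the desired output is $[([\,],x)]$. This is obtained by pairing $x$ with the constant empty list of type $\Delta^*$ (giving $([\,],x)$), and then applying the list unit function from Example~\ref{ex:unit}.

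For the inductive step, I would apply the disjoint union combinator on the decomposition $\trees_{k+1}(\Delta,\Sigma) = \trees_k(\Delta,\Sigma) + \Delta \times (\trees_k(\Delta,\Sigma))^+$. The left summand is handled directly by the inductive hypothesis $F_k$. For an input $(d,[T_1,\ldots,T_n])$ in the right summand, I would first use the pairing combinator together with the map of $F_k$ to transform it into $(d,[T_1',\ldots,T_n'])$, where $T_i' := F_k(T_i) \in (\Delta^* \times \Sigma)^*$ gives, for each leaf of the subtree $T_i$, its ancestor list within $T_i$ paired with its label. It then remains to prepend $d$ to the first coordinate of every pair appearing inside every $T_i'$ and concatenate all those lists with $\mathsf{flat}$.

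The main obstacle is this broadcasting step: the map combinator does not let me close over the outer value $d$, so a direct ``map prepend-$d$'' is not available. The key idea is to exploit the finiteness of $\Delta$ and convert the dynamic broadcast into a finite case analysis. Enumerating $\Delta = \{d_1,\ldots,d_N\}$ and iterating $\mathsf{distribute}$ turns the type $\Delta \times ((\Delta^* \times \Sigma)^*)^+$ into the co-product $\bigsqcup_{j=1}^N \{d_j\} \times ((\Delta^* \times \Sigma)^*)^+$. On the summand indexed by $d_j$ the label is now a known constant, so I project to $[T_1',\ldots,T_n']$, apply the map of $h_{d_j}^*$, and finish with $\mathsf{flat}$; here $h_{d_j}(A,x) := (\mathsf{append}(d_j,A), x)$ is first-order by pairing the second projection with the composition ``project to $A$, pair with the constant $d_j$, apply $\mathsf{append}$''. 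The disjoint union combinator then glues the $N$ summands into a single function, yielding $F_{k+1}$.

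Apart from this broadcasting trick (which is essentially a generalised ``if-then-else'' in the spirit of Example~\ref{ex:if-then-else} but ranging over the finite set $\Delta$ rather than $\{0,1\}$), the rest of the construction is a routine composition of the basic functions and combinators, so no further complications are expected.
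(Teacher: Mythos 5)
Your proof is correct and follows essentially the same route as the paper's: induction on $k$, applying the inductive hypothesis to the subtrees via the map combinator and then prepending the root label to every ancestor list before flattening. The only difference is one of care rather than substance: the paper dispatches the prepending step with the phrase ``using projection, map and append'', whereas you rightly observe that map cannot close over the outer label and spell out the finite case analysis over $\Delta$ (via $\mathsf{distribute}$ and disjoint union) that this broadcast actually requires --- which is exactly why the lemma assumes $\Delta$ finite.
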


\begin{pr}
The key assumption here is that the depth of trees is bounded. We will prove the lemma by induction. For $k=0$, the function $a \in \Sigma \mapsto [([],a)]$ which is a first-order list functions satisfies the conditions in the statement. Let $k>0$, and $(s,[t_1,\ldots ,t_n]) \in \trees_k (\Delta,\Sigma)$. By induction hypothesis, there is a first-order list function $f$ transforming all the $t_i$ into a list as stated in the lemma. Using map (and pairing with identity), we can apply this function to all the subtrees $t_1,\ldots, t_n$ and get a pair $(s,[f(t_1),\ldots ,f(t_n)])$. We then just need to concatenate $s$ (which we can extract from the pair using projection) to all the lists of the ancestors already paired with the leaves, which we can do using projection, map and append. We get a pair $(s,\ell)$ where $\ell$ is a list of lists of pairs (list of ancestors, leaf). We finally project on the second element and flatten to get the desired list.
\end{pr}

\begin{pr}[of Theorem~\ref{thm:sequential-functions}]
Let $r : \Sigma^* \to \Gamma^*$ be a rational function, whose syntax is given by
 $ h : \Sigma^* \to M \quad \text{and} \quad out : M \times \Sigma \times M \to \Gamma^*$. 
Our goal is to show that $r$ is a first-order list function. We will only show how to compute $r$ on non-empty inputs. To extend it to the empty input we can use an if-then-else construction as in Example~\ref{ex:if-then-else}. We will define $r$ as a composition of five functions, described below. To illustrate these steps, we will show after each step the intermediate output, assuming that the input is a word from $\Sigma^+$ that looks like this: \mypicsmall{10}

\begin{enumerate}
\item Apply Theorem~\ref{thm:simon-compute} to $h$, yielding some $k$ and a function:
 $\Sigma^+ \to \trees_k(M,\Sigma)$ 
which maps each input to an $h$-factorisation. After applying this function to our input, the result is an $h$-factorisation which looks like:
\mypicsmall{12}

\item To the $h$-factorisation produced in the previous step, apply the function from Lemma~\ref{lem:sibling-profiles}, which replaces the label of each non-leaf node with its sibling profile. After this step, the output looks like this: \mypicsmall{11}

\item To the output from the previous step, we can now apply the function from Lemma~\ref{lem:ancestor-labels-recursive}, pushing all the information to the leaves, so that the output is a list that looks like this: \mypicsmall{9}

\item For $k$ as in the first step, consider the function
$g:    (M \times M)^* \times \Sigma \to M \times \Sigma \times M + \bot$
defined by:
 \begin{align*}
  & ([(s_1,t_1),\ldots,(s_n,t_n)],a) \quad \mapsto \quad 
  \begin{cases}
  	(s_1 \cdots s_n, a, t_n \cdots t_1) & \text{if $n \le k$} \\
  	\bot & \text{otherwise}
  \end{cases}
\end{align*}
The  function $g$ is a first-order list function, because it returns $\bot$ on all but finitely many arguments. Apply $g$ to all the elements of the list produced in the previous step (using map), yielding a list from $(M \times \Sigma \times M)^*$ which looks like this: 
\mypicsmall{13}

\item In the list produced in the previous step, the $i$-th position stores the $i$-th triple as in the definition of rational functions (Definition~\ref{def:aperiodic-rational-function}). Therefore, in order to get the output of our original rational function $r$, it suffices to apply $out$ (because of finiteness, $out$ is a first-order list function) to all the elements of the list obtained in the previous step (with map), and then use $\mathsf{flat}$ on the result obtained.
\end{enumerate}
\end{pr}

\section{First-order transductions}
\label{section:fotransd}
This section states the main result of  this paper:  the  first-order list  functions  are exactly those that can be defined using first-order transductions (\FO-transductions).  We begin by describing \FO-transductions in Section~\ref{subsection:fotransdef}, and then in  Section~\ref{subsection:listasstructure}, we show how they can be applied to types from $\types$ by using an encoding of lists, pairs, etc.~as  logical structures; this allows us to state our main result, Theorem~\ref{thm:transductions}, namely that \FO-transductions are the same as first-order list functions (Section~\ref{subsection:mainresult}). The proof of the main result is given in Sections~\ref{subsection:mainresult}, \ref{subsection:sst} and~\ref{section:register}.

\subsection{\FO-transductions: definition}
\label{subsection:fotransdef}
A \emph{vocabulary} is a set (in our application, finite) of relation names, each one with an associated arity (a natural number). We do not use functions. If $\Vv$ is a vocabulary, then a \emph{logical structure} over $\Vv$ consists of a universe (a set of elements), together with an interpretation of each relation name in $\Vv$ as a relation on the universe of corresponding arity. 

An \emph{\FO-transduction} \cite{CourcelleBook} is a method of transforming one logical structure into another which is described in terms of first-order formulas. More precisely, an \FO-transduction consists of two consecutive operations: first, one copies the input structure a fixed number of times, and next, one defines the output structure using a first-order interpretation (we use here  what is sometimes known as a \emph{one dimensional interpretation}, i.e.~we cannot use pairs or triples of input elements to encode output elments). The formal definitions are given below.

\paragraph*{One dimensional \FO-interpretation.}
The syntax of a one dimensional \FO-interpretation (see also~\cite[Section 5.4]{hodges1993model}) consists of:
\begin{enumerate}
	\item Two  vocabularies, called the \emph{input} and \emph{output} vocabularies.
 	\item A formula  of first-order logic with one free variable over the input vocabulary, called the \emph{universe formula}.
 	\item For each  relation name $R$ in the output vocabulary, a formula $\varphi_R$ of first-order logic over the input vocabulary, whose number of free variables is equal to the arity of $R$. 
\end{enumerate}
The semantics   is a function from logical structures over the input vocabulary to logical structures over the output vocabulary given as follows. The universe of the output structure consists of those elements in the universe of the input structure which make the universe formula true. A predicate $R$ in the output structure  is interpreted as those tuples which are in the universe of the output structure and make the formula $\varphi_R$ true. 
\paragraph*{Copying.}
For a positive integer $k$ and a vocabulary $\Vv$, we define $k$-copying (over $\Vv$) to be the function which inputs a logical structure over $\Vv$, and outputs $k$ disjoint copies of it, extended  with an additional $k$-ary predicate  that selects a tuple $(a_1,\ldots,a_k)$  if and only if there is some $a$ in the input structure such that $a_1,\ldots,a_k$ are the respective copies  of $a$. (The additional predicate is sensitive to the ordering of arguments, because we distinguish between the first copy, the second copy, etc.) 

\begin{definition}[\FO-transduction]
\label{def:fo-transduction}
An \FO-transduction is defined to be an operation on relational structures which is the composition of $k$-copying for some $k$, and of a one dimensional \FO-interpretation. 
\end{definition}

\FO-transductions are a robust class of functions. In particular, they are closed under composition. 
Perhaps even better known are the more general \MSO-transductions, we will discuss these at the end of the paper.

\paragraph*{An example of  \FO-transduction.} We give here a simple example of an \FO-transduction.  Consider the word structure 
\begin{align*}
\Ss=(\Uu, S, <, (Q_a)_{a \in \Sigma})
\end{align*}
over a finite alphabet $\Sigma=\{a,b\}$. The universe $\Uu$ is a finite set of positions $\{0, 1, \dots, n\}$ in the word.
For first order variables $x, y$, we have 
the relations $x < y$ and the successor relation $S(x,y)$ 
with the obvious meanings. We also 
have the relation 
$Q_a(x)$ which evaluates to true if $x$ can be assigned some value 
$i \in \Uu$ such that 
the $i$th position 
of the word has an $a$. For example, the word 
$ababa$ satisfies the formula 
\begin{align*}
& \exists x[first(x) \wedge Q_a(x)] \\
\wedge & \exists x[last(x) \wedge Q_a(x)] \\
\wedge & \forall x, y [(S(x,y) \wedge Q_a(x) \rightarrow \neg Q_a(y)) \wedge (S(x,y) \wedge Q_b(x) \rightarrow \neg Q_b(y)]
\end{align*}
where 
$first(x)=\forall y(x \leq y)$ and $last(x)=\forall y(y \leq x)$. 

Consider the transduction which transforms a word $w$ into $w_1w_2$ where $w_1$ and $w_2$ respectively 
are obtained by removing the $b$'s and $a$'s from $w$. For example $ababa$ is transformed into $aaabb$.
\begin{enumerate}
\item 	
We make two copies of the input structure. The nodes in the first copy labelled by an $a$ as well as the nodes in the second copy labelled by a $b$ are in the universe of the output word. They are specified by the \FO-formula $\varphi^1(x)=Q_a(x)$ and $\varphi^2(x)=Q_b(x)$. 
\item The edges between nodes in the first copy are specified by the 
formula  
\begin{align*}
\varphi^{1,1}(x,y)=x< y \wedge \neg \exists z(x < z< y \wedge Q_a(z))
\end{align*} 
which allows an edge between 
an $a$ and the next occurrence of an $a$. Likewise, edges between nodes in the second  copy are specified by the 
formula  
\begin{align*}
\varphi^{2,2}(x,y)=x< y \wedge \neg \exists z(x < z< y \wedge Q_b(z))
\end{align*} 
\item Finally, we specify edges between the nodes of copy 1 and copy 2. 
The formula 
\begin{align*}
\varphi^{2,1}(x,y)=false
\end{align*}
disallows any edges from the second copy to the first copy, while 
the formula 
\begin{align*}
\varphi^{1,2}(x,y)=(Q_a(x) \wedge \forall z(z > x \rightarrow \neg Q_a(z))) \wedge 
(Q_b(y) \wedge \forall z(z < y \rightarrow \neg Q_b(z)))
\end{align*}
enables an edge from the last $a$ (the last position in the first copy)
to the first $b$ (the first position in the second copy).  
\end{enumerate}

This results in the word where all the $a$'s in $w$ appear before all the $b$'s in $w$. 

\subsection{Nested lists as logical structures}
\label{subsection:listasstructure}

Our goal is to use \FO-transductions to define functions of the form $f : \Sigma \to \Gamma$, for types $\Sigma, \Gamma \in \types$. To do this, we need to represent elements of $\Sigma$ and $\Gamma$ as logical structures. We use a natural encoding, which is essentially the same one as is used in the automata and logic literature, see e.g.~\cite[Section 2.1]{Thomas1997}.

Consider a type   $\Sigma \in \types$.  We represent an element $x \in \Sigma$ as a relational structure, denoted by $\underline x$, as follows:
\begin{enumerate}
	\item The universe $\Uu$ is the nodes in the parse tree of $x$ (see Figure~\ref{fig:parse-tree}).  
	\item There is a binary  relation $\pare(x,y)$ for the parent-child relation which says that $x$ is the parent of $y$. 
	\item There is a binary relation $\sib(x,y)$ for the transitive closure of the ``next sibling'' relation.  
	The next sibling relation $\nsib(x,y)$ is true if $y$ is the next sibling of $x$: 
	that is,  there is a node $z$ which is the parent 
	of $x$ and $y$, and there are no children of $z$ between $x$ and $y$ (in that order).  
	$\sib(x,y)$ evaluates to true if $x, y$ are siblings, and $y$ after $x$. 
		\item   For every node $\tau$ in the parse tree of the type $\Sigma$ (see Figure~\ref{fig:syntax-tree}), there  is a unary predicate $\type(\tau)$, which selects the elements from the universe of $\underline x$, (equivalently the subterms of $x$) 
		that have the type as $\tau$. For example, for the node $\tau$ labeled with $[b]$, 
		$B^*(\tau)$ evaluates to true if $b \in B$.  
\end{enumerate}

\begin{figure}[hbt]
\begin{center}
\begin{tikzpicture}[sibling distance=3.5em, level distance=4em]
\tikzstyle{one}=[]
\node[one] (f) {$([[a,b],[a,a,b],[],c],[(a,[b])])$}
    child { node[one] {$[[a,b],[a,a,b],[],c]$} 
    	child { node[one] {$[a,b]$} 
    		child { node[one] {$a$} } 
    		child { node[one] {$b$} } 
    		child { node[one] {} edge from parent[draw=none] } } 
    	child { node[one] {$[a,a,b]$} 
    		child { node[one] {} edge from parent[draw=none] }
    		child { node[one] {$a$} } 
    		child { node[one] {$a$} } 
    		child { node[one] {$b$} } }
    	child { node[one] (a) {$[]$} }
    	child { node[one] (b) {$c$} } 
    	child { node[one] {} edge from parent[draw=none] } }	
    child { node[one] {} edge from parent[draw=none] } 		
	child { node[one] (h) {$[(a,[b])]$}
	    child { node[one] {} edge from parent[draw=none] } 
		child { node[one] {$(a,[b])$} 
			child { node[one] {$a$} }
			child { node[one, shape=circle, draw=purple, dashed] (g) {$[b]$} 
				child { node[one] {$b$} } } } };
\draw[->, dashed, draw=blue] (a)--(b);
\node[below right =-1em and 0em of a] (c) {\tiny \textcolor{blue}{next}};
\node[] (d) at (c.south) {\tiny \textcolor{blue}{sibling}};
\node[] (e) at (d.south) {\tiny \textcolor{blue}{relation}};
\draw[->, draw=red] (h)--(f);
\node[below right =0.1em and -4em of f] (j) {\tiny \textcolor{red}{parent}};
\node[] at (j.south) {\tiny \textcolor{red}{relation}};
\node[above right =-0.2em and 0em of g] (k) {\tiny \textcolor{purple}{predicate}};
\node[] at (k.south) {\tiny \textcolor{purple}{$\in B^*$}};
\end{tikzpicture}
\end{center}	
\caption{\label{fig:parse-tree} The parse tree of a nested list.}
\end{figure}
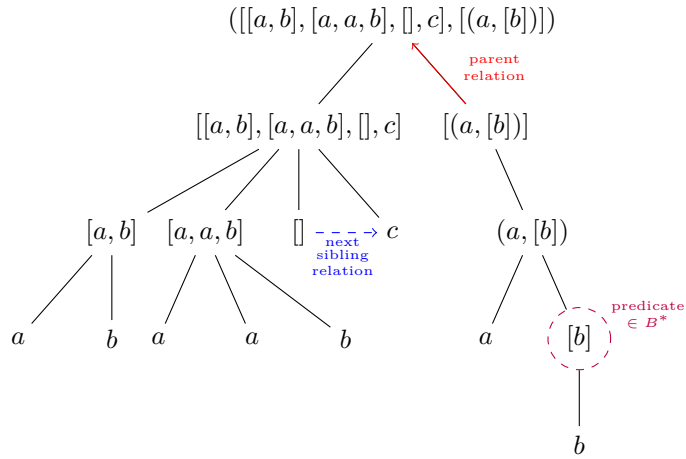

\begin{figure}[hbt]
\begin{center}
\begin{tikzpicture}[sibling distance=5em, level distance=3em]
\tikzstyle{one}=[scale=1, shape=circle]
\node[one] {$\times$}
    child { node[one] {$*$} 
    	child { node[one] {$+$} 
    		child { node[one] {$*$} 
    			child { node[one] {$+$}
    				child { node[one] {$A$} }
    				child { node[one] {$B$} } } } 
    		child { node[one] {$C$} }
    		child { node[one] {} edge from parent[draw=none] } } }
	child { node[one] {$*$} 
		child { node[one] {$\times$}
			child { node[one] {} edge from parent[draw=none] }
			child { node[one] {$A$} }
			child { node[one] {$*$}
				child { node[one] {$B$} } } } };
\end{tikzpicture}
\end{center}	
\caption{\label{fig:syntax-tree} The parse tree of a type in $\types$.}
\end{figure}
 
We write $\underline \Sigma$ for the relational vocabulary used in the structure $\underline x$. This vocabulary has two binary relations, as described in items 2 and 3, as well as one unary relation for every node in the parse tree of the type  $\Sigma$.

\begin{definition}
\label{def:fo-transduction-definable}
Let $\Sigma,\Gamma \in \types$. We say that a function $f : \Sigma \to \Gamma$ is \emph{definable by an \FO-transduction} if it is an \FO-transduction under the encoding $x \mapsto \underline x$; more formally, if there is some \FO-transduction $\varphi$ which makes the following diagram commutes:
\begin{align*}
 \xymatrix@C=4cm{ 
 \Sigma \ar[d]_f \ar[r]^{x \mapsto \underline x} & \text{structures over } {\underline \Sigma} \ar[d]^\varphi \\
 \Gamma \ar[r]^{x \mapsto \underline x} & \text{structures over }{\underline \Gamma}
 }
\end{align*}
\end{definition}

It is important that the encoding $x \mapsto \underline x$ gives the transitive closure of the next sibling relation. For example when the type $\Sigma$ is  $\set{a,b}^*$, our representation  allows a first-order transduction to access the order $<$ on positions, and not just the successor relation. For first-order logic (unlike for  \MSO) there is a significant difference between having access to order vs successor on list positions. 

\subsection{Main result}
\label{subsection:mainresult}
Below is one of the main contributions of this paper.

\begin{theorem}
\label{thm:transductions}
Let $\Gamma,\Sigma \in \types$. A function $f : \Sigma \to \Gamma$ 
is a first-order list function if and only if
it is definable by  an \FO-transduction.
\end{theorem}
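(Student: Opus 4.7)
The proof naturally splits into the two containments, proved independently.

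\textbf{First-order list functions $\subseteq$ \FO-transductions.}
I would proceed by structural induction on the construction of first-order list functions. Since \FO-transductions are closed under composition, it is enough to show that each basic function and each combinator is realisable by an \FO-transduction under the encoding $x \mapsto \underline x$. The product and coproduct basics (projection, coprojection, distribute) require only a bounded reshaping of the parse tree, easily described by first-order formulas and a small copying factor. For the list primitives, $\mathsf{reverse}$ is obtained by swapping the orientation of the $\sib$ relation, which is first-order definable because its transitive closure is already in the vocabulary; $\mathsf{append}$, $\mathsf{co-append}$ and $\mathsf{block}$ add or remove a constantly bounded amount of structure around the root, realised with one or two copies; and $\mathsf{flat}$ just deletes the intermediate list-of-list nodes, splicing their sibling orders. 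The delicate combinator is $\mathsf{map}$: if $f:\Sigma\to\Gamma$ is realised by a $k$-copying plus interpretation $\varphi$, then $f^*$ copies the whole input $k$ times and applies $\varphi$ to each outer-list child independently, with all quantifiers in $\varphi$ relativised to the subtree of that child (which is first-order definable from $\pare$).

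\textbf{\FO-transductions $\subseteq$ first-order list functions.}
I would first reduce to the string-to-string case. Namely, every type $\tau\in\types$ admits a natural encoding into a string over a finite alphabet of brackets, commas and type-tags, and both the encoding $\Sigma \to A^*$ and its inverse $A^* \to \Sigma$ can be shown to be first-order list functions by composing the examples in Section~\ref{section:examples} with Corollary~\ref{cor:products}. Under this encoding an \FO-transduction $\Sigma\to\Gamma$ becomes an \FO-transduction $A^*\to B^*$ (after a preliminary first-order check that the input is a well-formed encoding, discarded otherwise). For the string case I would then use a concrete machine model — an aperiodic variant of streaming string transducers or, equivalently, a one-register single-use register transducer with aperiodic transition monoid — which is known to capture first-order string-to-string transductions. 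The simulation of such a machine by a first-order list function proceeds by applying Theorem~\ref{thm:simon-compute} to the transition homomorphism $h:A\to S$ to produce an $h$-factorisation of the input, and then evaluating this factorisation bottom-up, computing at every internal node the composite register update produced on its subword.

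The main obstacle is the evaluation step. Register updates of a streaming transducer form a monoid whose cardinality explodes under composition, so to put Theorem~\ref{thm:simon-compute} to use one must first impose a copyless (or single-use) normal form ensuring that the register-update monoid is finite and aperiodic. Once the normal form is in place, the bottom-up evaluation through a $\trees_k$-valued first-order list function — using $\mathsf{map}$ for the recursive descent and the aperiodicity argument of Claim~\ref{claim:degreethree} for the unbounded-arity internal nodes — will follow the same pattern as the proof of Theorem~\ref{thm:sequential-functions}. Establishing this normal form and the accompanying register calculus is the content I would expect to occupy Sections~\ref{subsection:sst} and~\ref{section:register}.
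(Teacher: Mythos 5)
Your left-to-right direction matches the paper's proof essentially step for step (structural induction, bounded reshaping of parse trees for the basic functions, relativisation of quantifiers for map), and the skeleton of your right-to-left direction -- reduce to strings, decompose the \FO-transduction through an aperiodic copyless streaming string transducer, use Theorem~\ref{thm:simon-compute} on the transition homomorphism -- is also the paper's route. But there is a genuine gap in how you propose to discharge the evaluation step, and it sits exactly where you locate ``the main obstacle.'' You claim that the copyless normal form ensures ``the register-update monoid is finite and aperiodic,'' and that the unbounded-arity nodes of the factorisation can then be handled by ``the aperiodicity argument of Claim~\ref{claim:degreethree}.'' Neither holds. The monoid of monotone nonduplicating $k$-register updates over $\Gamma^*$ is infinite -- its elements carry arbitrary words of $\Gamma^*$ in their right-hand sides -- and the product of a long homogeneous sequence of such updates does not stabilise: the stored words keep growing, so the power-stabilisation trick of Claim~\ref{claim:degreethree} (which works only for a fixed element of a \emph{finite} aperiodic monoid) does not apply. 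Copylessness controls the \emph{length} of the composed update, not the cardinality of the monoid.

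What is actually needed, and what occupies Section~\ref{section:register} of the paper, is a second, nested use of the factorisation-forest strategy applied to the infinite monoid $\regmon{(\Gamma^*)}{k}$ itself: one quotients by the \emph{abstraction} homomorphism into the finite monoid $T_k$ of update shapes (registers only, monoid elements erased), and then gives an explicit first-order list function computing the product of a $\tau$-homogeneous list for each fixed shape $\tau$ (Lemma~\ref{lem:idempotent-product}). Even that homogeneous case is nontrivial: for one register it requires splitting each update into the part left and right of the register occurrence, reversing one of the resulting lists, and flattening; for several registers it requires the observation that all but one register is ``temporary'' (its value depends only on the last $k$ updates, handled by the window construction) so that the problem reduces to the one-register case. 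Your proposal identifies the right architecture but substitutes an argument that would fail for the one lemma that carries the real weight of the right-to-left implication.
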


Before proving Theorem~\ref{thm:transductions}, let us note the following corollary:
the equivalence of first-order list functions (i.e.~do they give the same output for every input) is decidable. Indeed, we will see below that any first-order list function can be encoded into a string-to-string first-order list function. Using this encoding and Theorem~\ref{thm:transductions}, the equivalence problem of first-order list functions boils down to deciding equivalence of string-to-string \FO-transductions; which is decidable~\cite{doi:10.1137/0211035}.

\medskip

\paragraph*{Proof of the left-to-right implication of Theorem~\ref{thm:transductions}.} The proof of the left-to-right implication of Theorem~\ref{thm:transductions} is by induction following the definition of first-order list functions. The basic functions $\mathsf{projection}$, $\mathsf{co-projection}$ and $\mathsf{distribute}$ are clearly definable by 
\FO-transductions. We prove now that $\mathsf{reverse}$, $\mathsf{flat}$, $\mathsf{append}$, $\mathsf{co-append}$ and $\mathsf{block}$ are also defined by
\FO-transduction.

In all 
of the following cases, let $\rooot(z)$ be a macro for the root node ($\rooot(z)=\neg \exists z' \pare(z',z)$).
In all cases below, for a list $\tau$, let the structure of $\tau$ be $\underline{\tau}$, and  
recall that we have $\pare, \nsib, \sib \in \underline{\Delta}$, the vocabulary of $\underline{\tau}$. 

\begin{enumerate}
\item \textit{$\mathsf{Reverse}$.} Given a list $\tau$, the first-order list function $\mathsf{reverse}(\tau)$ can be implemented using an \FO-transduction as follows: The nodes in the parse tree of $\mathsf{reverse}(\tau)$ is specified by the universe formula
$\varphi^1(x)=true$, selecting all the nodes from the parse tree of $\tau$. 
 The parent-child relations are   
left unchanged but the next sibling relations are reversed.  
\item \textit{$\mathsf{Append}$.} Given $\tau=(x_0, [x_1, \dots, x_n]) \in \Sigma \times \Sigma^*$, the first-order list function  
$\mathsf{append}(\tau)$ resulting in $[x_0, x_1, \dots, x_n]$ is implemented using an  \FO-transduction as follows: 
\begin{enumerate}
\item The nodes in the parse tree of $\mathsf{append}(\tau)$ is specified by the universe formula which 
selects all nodes $y$ in the parse tree of $\tau$ which are not the second child of the root. Remind that the second child of the root here represents the entire list $[x_1, \dots, x_n]$.
$$\varphi^1(y)=\neg [\pare(x,y) \wedge \rooot(x) \wedge \neg \exists z \nsib(y,z)]$$
Note that in the parse tree of $\tau$, the root node has two children, and 
in the constructed parse tree, we omit this second child. 
\item We now specify the parent child relation. This retains the leftmost child of the root in the tree of $\tau$ as a child 
in $\mathsf{append}(\tau)$, and in addition, adds all the children of the second child of 
the root in $\tau$ as the children of the root in $\mathsf{append}(\tau)$. 
\begin{align*}
\varphi^{1,1}(x,y)=\{\neg \rooot(x) \wedge \pare(x,y)\} \vee 
[\rooot(x) \wedge \pare(x,y) \wedge \neg \exists z \nsib(z,y)] \\
\vee 
[\rooot(x) \wedge \{\exists z'[\pare(x,z') \wedge \exists z''(\nsib(z'',z)) \wedge 
\pare(z',y)]\}]
\end{align*}

\end{enumerate}
\item \textit{$\mathsf{Co-append}$.} Given $\tau=[x_0,x_1, \dots, x_n] \in \Sigma^*$, the first-order list function  
$\mathsf{co-append}(\tau)$ resulting in $(x_0, [x_1, \dots, x_n])$ if $n \geq 1$ and undefined otherwise, 
is implemented using an  \FO-transduction as follows: To obtain the parse tree of $\mathsf{co-append}(\tau)$, we make two copies 
of the parse tree of $\tau$.
\begin{enumerate}
\item  The nodes in the first copy are specified by the formula 
$\varphi^1(y)=true$ selecting all the nodes  of $\tau$. 
The second child of the root 
in the parse tree of $\tau$ is selected in the second copy. Note that 
the existence of a second child checks the condition that $n \geq 1$, without which 
the function is not defined.  The formula 

$$\varphi^2(y)=\exists x[\rooot(x) \wedge \pare(x,y) \wedge \exists z. [\pare(x,z) \wedge \nsib(z,y) \wedge \neg \exists z' \nsib(z',z)]
]$$ 

selects the second child of the root in the parse tree of $\tau$. 
\item The parent child relation in the first copy is defined as follows: 
It allows all the edges already present except the parent-child relation
between the root and the nodes which are not the first child of the root.

$$\varphi^{1,1}(x,y)=\psi_1 \vee \psi_2$$ 
where 
$$\psi_1=\rooot(x) \wedge \pare(x,y) \wedge \neg \exists z[\nsib(z,y) \wedge \pare(x,z)]$$
$$\psi_2=\neg \rooot(x) \wedge \pare(x,y)$$ 

The parent child relation in the second copy is defined by
$\varphi^{2,2}(x,y)=false$, since there is a unique node in the second copy.

\item There is one edge from the first copy to the second which makes the root of the first copy the parent of the unique node in the second copy.
The unique node in the second copy is a parent to all the non-leftmost children 
of the root. This is given by formulae 
$\varphi^{1,2}(x,y)=\rooot(x)$ and 
$$\varphi^{2,1}(x,y)=\exists  z[\rooot(z) \wedge \pare(z,y) \wedge \exists z'[\pare(z,z') \wedge \nsib(z',y)]]$$ 
See Figure \ref{figcoapp} where this is illustrated on an example. 
\end{enumerate}
\begin{figure}[h]
\includegraphics[scale=0.3, page=2]{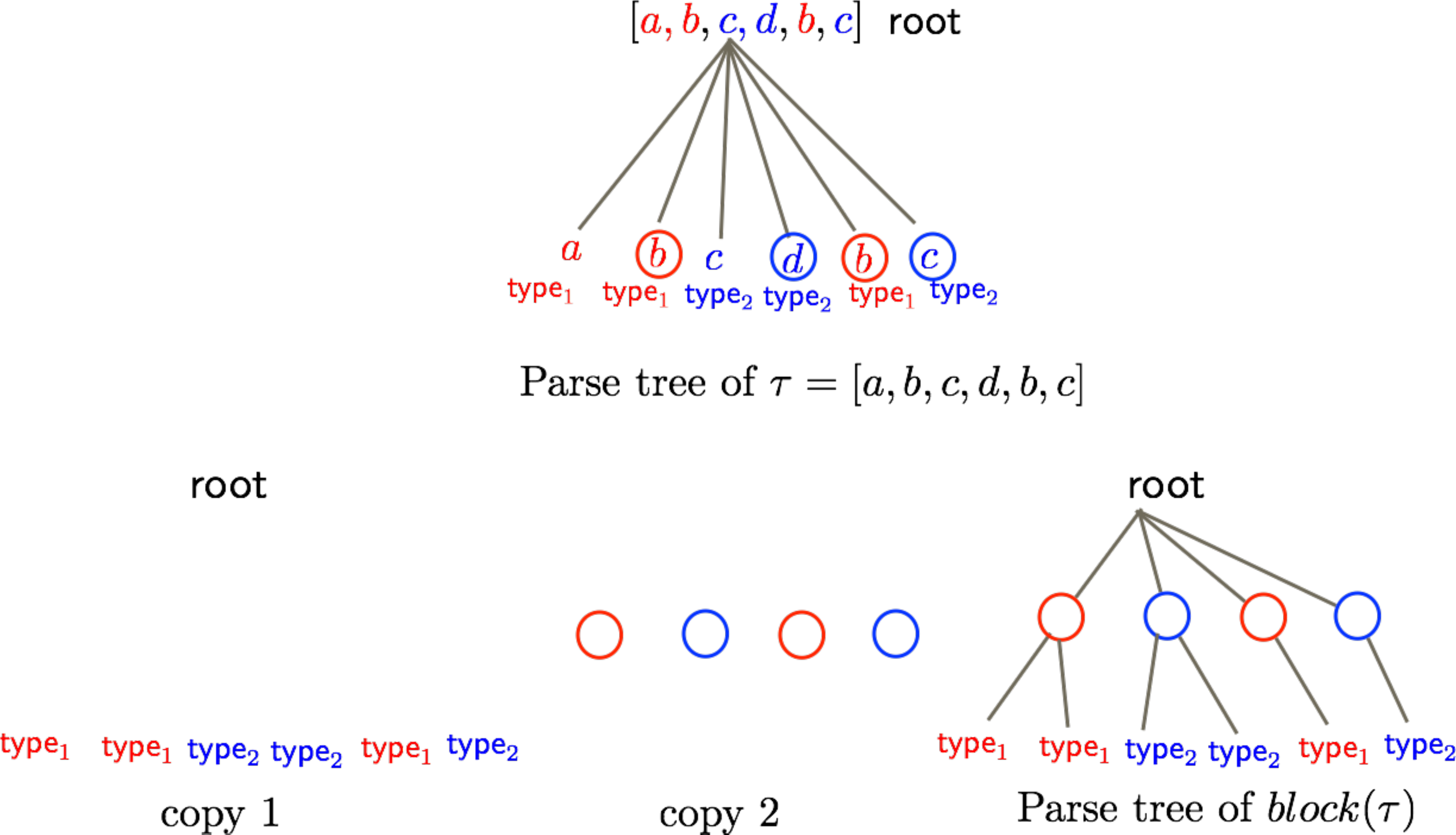}	
\caption{We start with $\tau=[[a,b],[c,d],[e]] \in \Sigma^{**}$.  
On the left is the parse tree of $\tau$. 
 Copy 1 has all the nodes 
 in the parse tree of $\tau$, while copy 2 only has the red circled node
  from the parse tree of $\tau$. The edges in copy 1 include all original edges except the one from the root 
  in the parse tree of $\tau$ to children having a left sibling.  
    The parse tree of $\mathsf{co-append}(\tau)$ is obtained 
 by drawing edges from the root in copy 1 to the only node in copy 2, 
 and from the node in copy 2 to all non-leftmost children of the root,  
 as illustrated.  
}
\label{figcoapp}
\end{figure}

\item 	
\textit{$\mathsf{Flat}$.}
Given a list $\tau$, the first-order list function $\mathsf{flat}(\tau)$ can be implemented using an \FO-transduction as follows:
\begin{enumerate}
\item The nodes in the parse tree of $\mathsf{flat}(\tau)$ 
are all nodes $y$ which are 
not the children of the root node. Note that we do not need any copying of the input structure 
here. It is given by the formula
$$\varphi^1(y)=[\pare(x,y) \wedge \neg \rooot(x)] \vee \rooot(y)$$ 
\item To specify the parent child relation, all nodes other than the root 
have the same parent child relation as before. We also connect the grandchildren of the root 
to the root. This is specified by  
$$\varphi^{1,1}(x,y)=\{\neg \rooot(x) \wedge \pare(x,y)\} \vee 
[\rooot(x) \wedge \{\exists z'[\pare(x,z') \wedge \pare(z',y)]\}]$$
 which says that a non-root node has the same children as before, while 
 the root's children in $\mathsf{flat}(\tau)$ are its grandchildren in $\tau$.  
\end{enumerate}

\item \textit{$\mathsf{Block}$.}
Let's now consider the function $\mathsf{block}$. 
Let $\Sigma$ and $\Gamma$ be types and $\Delta= (\Sigma + \Gamma)^*$. 
Let $\tau \in \Delta$.
Let $\underline{\tau}$ be the relational structure for $\tau$. To obtain 
$\mathsf{block}(\tau)$ as an \FO-transduction, we make two copies 
of the parse tree of $\tau$. The first copy has all the nodes. 
All edges except those defining the parent-child relation 
between the root and its children are present in the first copy. 
The second copy consists of nodes 
which are the children of the root, and whose next sibling 
is of a different type.  The only exception is when dealing with 
the last child of the root, which is always added. 
There are no edge relations between nodes in the second copy. 
We add a parent-child relation between the root of the first copy and all nodes 
in the second copy. Likewise, add a parent-child relation between 
a node $\alpha$ in the second copy with node $\alpha$ in the first copy and all
the following siblings of $\alpha$ who have the same type as $\alpha$. 
\begin{enumerate}
\item The universe formula describing the nodes in the first copy is given by 
$\varphi^1(x)=true$, including all the nodes. 
\item The edges between nodes in the first copy is given by 
$\varphi^{1,1}(x,y)= \neg \rooot(x) \wedge \pare(x,y)$ which 
retains all parent-child relations other than between the root and its children. 
\item Assume that we have finitely many $\type$ predicates $\type_1, \dots, \type_n$ in the structure $\underline{\tau}$. That is, 
$\type_1, \dots, \type_n$, $\pare, \nsib, \sib \in \underline{\Delta}$, the vocabulary of $\underline{\tau}$. 
The universe formula describing nodes in the second copy 
is given by $$\varphi^2(x)=\exists z.[\pare(z,x) \wedge \rooot(z)] \wedge 
\{\bigvee_{i=1}^n [\type_i(x) \wedge \nsib(x,y) \rightarrow \neg \type_i(y)]\}$$
This selects all children of the root which either does not have a next sibling (last child)
or whose next sibling has a different type.
\item The edge relation between nodes in the second copy is $\varphi^{2,2}(x,y)=false$, thereby disallowing any edges. 
\item  The edges from nodes in the first copy to the second copy is given by 
$$\varphi^{1,2}(x,y)=\rooot(x) \wedge \pare(x,y)$$ which 
enables edges from the root of the first copy to all nodes in the second copy. Recall that $\pare(x,y)$ 
is true in $\tau$ for all nodes $y$ in the second copy and the root $x$. 
\item 
The edges from  nodes in the second copy to nodes in the first copy 
is given by 
$$\varphi^{2,1}(x,y)=\psi_1 \wedge \psi_2$$ 
where 
$$\psi_1=[\sib(y,x) \vee (x=y)] \wedge [x \neq y \rightarrow \bigvee_{i=1}^n[\type_i(y) \leftrightarrow \type_i(x)]]$$  
$$\psi_2=
 \neg \exists z' (\sib(z',x) \wedge \sib(y,z') \wedge 
\bigvee_{i=1}^n[\type_i(y) \leftrightarrow \neg \type_i(z')])
$$ 
$\psi_1$ collects all the siblings to the left 
of $x$ (and itself) which have the same  type as $x$, while 
$\psi_2$ ensures that 
the chosen nodes are contiguous and of the same type. 
This ensures that we ``block'' nodes of the same type and assign it one parent, 
and a change of type results in a different parent. Figure \ref{figblock} illustrates this on an example. 
\end{enumerate}

\begin{figure}[h]
\includegraphics[scale=0.3,page=1]{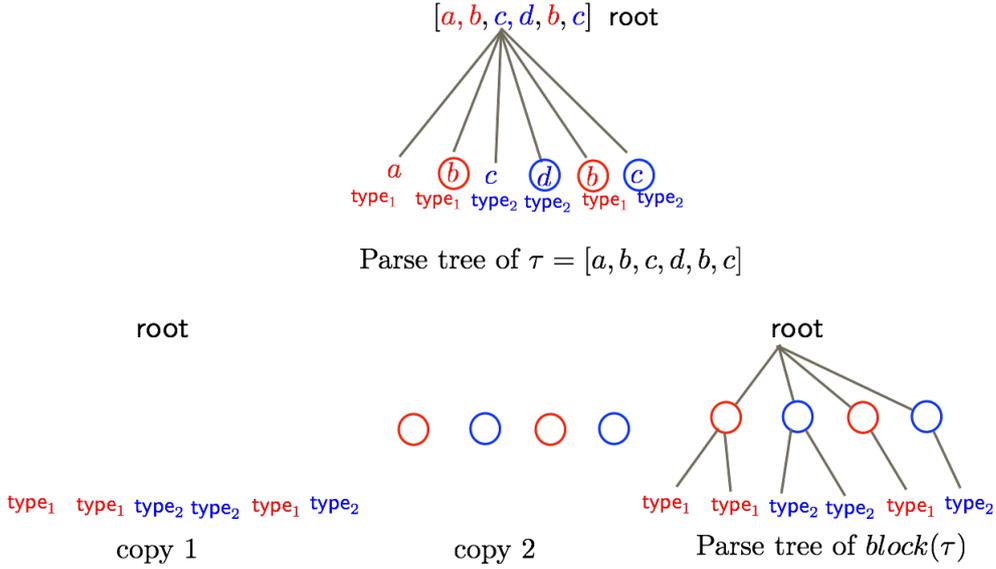}	
\caption{We start with $\tau=[a,b,c,d,b,c] \in (\Sigma +\Gamma)^*$ where $a,b \in \Sigma$ and $c, d \in \Gamma$. 
On the left is the parse tree of $\tau$. The type $\mathsf{type}_1$ 
 represents $\Sigma$ while $\mathsf{type}_2$ represents $\Gamma$. Copy 1 has all the nodes 
 in the parse tree of $\tau$, while copy 2 only has the circled nodes 
 from the parse tree of $\tau$. The parse tree of $\mathsf{block}(\tau)$ is obtained 
 by drawing edges from copy 1 to copy 2, and back as illustrated.  
}
\label{figblock}
\end{figure}

\item Putting together all the basic list functions using combinators : 	Finally, we are left to prove that \FO-transductions are closed under the four combinators. It is clear that \FO-transductions are closed under disjoint union and composition. Moreover, map and pairing can be handled the same way. Consider two \FO-transductions $f$ and $g$ and a relational structure representing a list or a pair. Map of $f$ (resp. pairing $f$ and $g$) is defined by applying $f$ to all the subtrees of the root in the relational structure (resp. applying $f$ to the left subtree of the root and $g$ to the right one). It is clear that this can be done with an \FO-transduction making a number of copies equal to the maximum of the number of copies required for $f$ and $g$ and linking the roots of the subtrees obtained by applying $f$ and $g$ to one unique new root.
\end{enumerate}

\paragraph*{Proof of the right-to-left implication  of Theorem~\ref{thm:transductions}.}
The more challenging right-to-left implication is described in the rest of this section. First, by using an encoding of nested lists of bounded depth via strings, e.g.~{\sc xml} encoding (both the encoding and decoding are easily seen to be both first-order list functions  and definable by \FO-transductions), we obtain the following lemma:

\begin{lemma}
\label{lem:strings-wlog}
To prove the right-to-left implication of Theorem~\ref{thm:transductions}, it suffices to show it for string-to-string functions, i.e.~those of type  $\Sigma^* \to \Gamma^*$ for some finite sets $\Sigma,\Gamma$.
\end{lemma}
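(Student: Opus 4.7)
The plan is to fix, for every type $\Sigma \in \types$, an XML-style encoding $\mathsf{enc}_\Sigma : \Sigma \to \Omega_\Sigma^*$ into strings over a finite alphabet $\Omega_\Sigma$ that contains the constants appearing at the leaves of the parse tree of $\Sigma$, together with finitely many marker symbols (one opening bracket and one closing bracket for each constructor node in the parse tree of $\Sigma$, plus a separator for commas in lists and products). For instance, an element of $(\Sigma + \Gamma)^*$ is written as a flat string in which each list element is surrounded by type tags and separated by commas. Let $\mathsf{dec}_\Sigma : \Omega_\Sigma^* \to \Sigma$ be the partial inverse, extended by an arbitrary default value on strings that are not well-formed encodings.

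The first step is to check, by induction on the structure of $\Sigma$, that both $\mathsf{enc}_\Sigma$ and $\mathsf{dec}_\Sigma$ are first-order list functions. Encoding only needs constant functions, pairing, disjoint union, map, and list concatenation (Example~\ref{ex:list-concatenation}) to surround subencodings by the appropriate markers. Decoding is built by repeatedly applying $\mathsf{block}$ and the list comma function from Example~\ref{ex:comma-function} to split a string at the matching brackets of each nesting level; because the depth of $\Sigma$ is a fixed constant, only finitely many such splits are ever required. The second step is to check that $\mathsf{enc}_\Sigma$ and $\mathsf{dec}_\Sigma$ are also definable by \FO-transductions when source and target are read through the canonical encoding $x \mapsto \underline{x}$ of Section~\ref{subsection:listasstructure}. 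The encoding is obtained by $k$-copying the parse tree (one copy per marker that a constructor node contributes to the string), defining the linear order on the string via a first-order formula on the parent and sibling relations of $\underline{x}$. Decoding in the other direction is possible because, for strings of bounded bracket depth, matching brackets can be identified by a first-order formula that looks up at most a constant number of enclosing brackets; from this, parent and sibling relations on the reconstructed parse tree are first-order definable.

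Now suppose the right-to-left implication of Theorem~\ref{thm:transductions} has been proved for string-to-string functions. Given any \FO-transduction $f : \Sigma \to \Gamma$ with $\Sigma, \Gamma \in \types$, define
\[
f' : \Omega_\Sigma^* \to \Omega_\Gamma^*, \qquad w \mapsto \begin{cases} \mathsf{enc}_\Gamma(f(\mathsf{dec}_\Sigma(w))) & \text{if $w$ is a valid encoding,} \\ [\,] & \text{otherwise.} \end{cases}
\]
Well-formedness of a bounded-depth bracketed string is first-order definable, so $f'$ is an \FO-transduction by composition with the already verified $\mathsf{enc}_\Gamma$, $\mathsf{dec}_\Sigma$, together with an if-then-else at the logical level. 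By the string-to-string hypothesis, $f'$ is then a first-order list function, and hence so is $f = \mathsf{dec}_\Gamma \circ f' \circ \mathsf{enc}_\Sigma$ because first-order list functions are closed under composition. The main technical hurdle is designing the marker alphabet and the bracket matching so that both translations are simultaneously realisable as first-order list functions and as \FO-transductions; boundedness of the type depth is what ultimately makes both directions first-order rather than requiring transitive closure.
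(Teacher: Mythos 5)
Your proposal is correct and follows essentially the same route as the paper, which simply asserts that an {\sc xml}-style encoding and decoding of bounded-depth nested lists are both first-order list functions and \FO-transductions and then composes. The paper leaves all details to the reader, so your elaboration (induction on the type, bounded bracket depth making matching first-order definable, and the conjugation $f = \mathsf{dec}_\Gamma \circ f' \circ \mathsf{enc}_\Sigma$) is a faithful expansion of the intended argument.
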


Without loss of generality, we can thus only consider the case when both the input and output are strings over a finite alphabet. This way one can use standard results on string-to-string transductions (doing away with the need to reprove the  mild generalisations to nested lists of bounded depth).

Every string-to-string \FO-transduction can be decomposed as a two step process:  (a) apply an aperiodic rational transduction to transform the input word into a sequence of operations which manipulate a fixed number of registers that store words; and then (b) execute the sequence of operations produced in the previous step, yielding an output word. Therefore, to prove Theorem~\ref{thm:transductions}, it suffices to show that (a) and (b) can be done by first-order list functions. Step (a) is Theorem~\ref{thm:sequential-functions}. Step (b) is described in Section~\ref{subsection:sst}. Before 
tackling the proofs, we need to generalise slightly the definition of first-order list functions.

\subsection{Generalised first-order list functions}

In some constructions below, it will be convenient to work with a less strict type discipline, which allows types such as ``lists of length at least three'' or 
\begin{align*}
\set{[x_1,\ldots,x_n] \in \set{a,b}^* : \text{every two consecutive elements differ}}
\end{align*}

\begin{definition}[First-order definable set]
\label{def:typesfo}
Let $\Sigma \in \types$. A subset $P \subseteq \Sigma$ is called \emph{first-order definable} if its characteristic  function $\Sigma \to \set{0,1}$ is a first-order list function.  Let $\typesfo$ denote first-order definable subsets of types in $\types$. 
\end{definition}
When $\Sigma$ is of the form $\Gamma^*$ for some finite alphabet, the above notion coincides with the usual notion of first-order definable language, as in the Sch\"utzenberger-McNaughton-Papert Theorem. The \emph{generalised first-order list functions} are defined to be first-order list functions as defined previously where the domains and co-domains are in $\typesfo$.

\begin{definition}[Generalised first-order list functions]
\label{def:folf}
A function 
$f : \Sigma \to \Gamma \quad \mbox{with} \quad \Sigma, \Gamma \in \typesfo$
is said to be a \emph{generalised first-order list function} if it is obtained by taking some first-order list function (definition~\ref{def:lf}) and restricting its domain and co-domain to first-order definable subsets so that it remains a total function.
\end{definition}

\subsection{Registers}
\label{subsection:sst}

To complete the proof of Theorem~\ref{thm:transductions}, it will be convenient to use a characterisation of \FO-transductions which uses registers, in the spirit of streaming string transducers~\cite{Alur2011}.

\paragraph*{Registers and their updates.}
Let $M$ be a monoid, not necessarily finite, and let $k \in \set{1,2,\ldots}$. We define a \emph{$k$-register valuation} to be a tuple in $M^k$, which we interpret as a valuation of registers called $\set{1,\ldots,k}$ by elements of $M$.  Define a \emph{$k$-register update} over $M$ to be a  parallel substitution, which transforms one $k$-valuation into another using concatenation, as in the following picture.
\mypic{4}
Formally, a $k$-register update is a $k$-tuple of words over $M \cup \set{1,\ldots,k}$. In particular, if $M$ is in $\typesfo$  then also the set of $k$-register updates is in $\typesfo$, and therefore it is meaningful to talk about (generalised) first-order list functions that input and output $k$-register valuations. If $\eta$ is a $k$-register update, we use the name \emph{$i$-th right hand side} for the $i$-th coordinate of the $k$-tuple $\eta$. 

There is a natural right action of register updates on register valuations: if $v \in M^k$ is a $k$-register valuation, and $\eta$ is a $k$-register update, then we define $v \eta \in M^k$ to be the $k$-register valuation where register $i$ stores the value in the monoid $M$ obtained by taking the $i$-th right hand side of $\eta$, substituting each register name $j$ with its value in $v$, and then taking the product in the monoid $M$. This right action can be implemented by a generalised first-order list function, as stated in the following lemma, which is easily proved by inlining the definitions.

\begin{lemma}
\label{lem:right-action}
Let $k$ be a non-negative integer, let $v \in M^k$ and assume that $M$ is a monoid whose universe is in $\typesfo$ and whose product operation is a generalised first-order list function. Then the function which maps a $k$-register update $\eta$ to the $k$-register valuation $v\eta \in M^k$ is a generalised first-order list function.
\end{lemma}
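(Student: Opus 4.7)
\begin{pr}[of Lemma~\ref{lem:right-action}]
The plan is to reduce the action $\eta \mapsto v\eta$ to the two basic building blocks that are already available: substituting register names by constants, and applying the monoid product to a list.

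First, since the output is a $k$-tuple and $k$ is fixed, by iterating the pairing combinator together with the $k$ projections it suffices to show that for each coordinate $i \in \{1,\ldots,k\}$, the function that inputs a $k$-register update $\eta$ and outputs the value in $M$ of the $i$-th right hand side evaluated under $v$, is a generalised first-order list function. So fix $i$ and apply projection to extract the $i$-th right hand side, which is a word $w \in (M + \{1,\ldots,k\})^*$.

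Second, I substitute the registers by the corresponding constants from $v$. Consider the function
\begin{align*}
s_v : M + \{1,\ldots,k\} \longrightarrow M
\end{align*}
defined as the identity on $M$ and as $j \mapsto v_j$ on $\{1,\ldots,k\}$. Since $\{1,\ldots,k\}$ is finite and $v$ is a fixed parameter, the restriction of $s_v$ to the finite summand is obtained by combining $k$ constant functions via the disjoint union combinator (see Example~\ref{ex:finite-functions}); combined with the identity on $M$ (Example~\ref{ex:identity}) via disjoint union, this yields a generalised first-order list function. Apply $s_v$ to every entry of $w$ using the map combinator, producing a list in $M^*$.

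Third, apply the monoid product $\pi : M^* \to M$, which is a generalised first-order list function by the assumption of the lemma. The result is exactly the evaluation of the $i$-th right hand side at $v$, i.e.~the $i$-th coordinate of $v\eta$. Assembling the $k$ coordinates via pairing finishes the proof. There is no real obstacle: every step is either a standard combinator application, a finite-set function, or a direct appeal to the hypothesis on $\pi$.
\end{pr}
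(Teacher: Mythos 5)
Your proof is correct and follows essentially the same route as the paper's: reduce to a single right hand side via projection and pairing, substitute register names by the fixed components of $v$ using a disjoint union of constant functions with the identity on $M$ applied via map, and then invoke the product operation of $M$. No gaps.
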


\begin{pr}
Consider a $k$-register update encoded as a $k$-tuple of words, each of which are encoded by a list of elements from $M$ and from $\{1,\ldots, k\}$. First, because $k$ is fixed and using projection and pairing, we can only consider the case of a single word. Let $v$ as in the lemma. Because $v$ is fixed, the function from the finite set $\{1,\ldots, k\}$ associating with $i$ the $i$th component of $v$ is a first-order list function. The function identity over $M$ is also a generalised first-order list function, so by using disjoint union first and then map, the function which transforms a list of elements from $M + \{1,\ldots, k\}$ into a list of element from $M$ by replacing $i$ by the $i$th component of $v$ is a generalised first-order list function. Finally, by using the product operation in $M$ which is a generalised first-order list function, one can compute the product of the elements of the list.
\end{pr}

\paragraph*{Non duplicating monotone updates.} 
A $k$-register update is called \emph{nonduplicating} if each register appears at most once in the concatenation of all the right hand sides; and it is called \emph{monotone} if after concatenating the right hand sides (from $1$ to $k$), the registers appear in strictly increasing order  (possibly with some registers missing). We write $\regmon M k$ for the set of nonduplicating and monotone $k$-register updates. 	

\bigskip

Lemma~\ref{lem:decomposition-of-fo-transduction} says that every string-to-string \FO-transduction can be decomposed as follows: (a) apply an aperiodic rational transduction to compute a sequence of monotone nonduplicating register updates; then (b) apply all those register updates to the empty register valuation (which we denote by $\bar \varepsilon$ assuming that the number of registers $k$ is implicit), and finally return the value of the first register. 

\begin{lemma}
\label{lem:decomposition-of-fo-transduction}
Let $\Sigma$ and $\Gamma$ be finite alphabets. Every \FO-transduction $f : \Sigma^* \to \Gamma^*$ can be decomposed as:
\begin{align*}
\xymatrix@C=3cm{\Sigma^* \ar[d]^{g} \ar[r]^f     &\Gamma^* \\   \Delta^*    \ar[r]_{\text{apply updates to $\bar \varepsilon$}} & (\Gamma^*)^k \ar[u]_{\mathsf{projection}_1}}
\end{align*}
for some positive integer $k$, where $\Delta$ is a finite subset of $\regmon {(\Gamma^*)} k$ (i.e.~a finite set of $k$-register updates that are monotone and nonduplicating), $g : \Sigma^* \to \Delta^*$ is an aperiodic rational function and $\mathsf{projection}_1$ is the projection of a k-tuple of $(\Gamma^*)^k$ on its first component.
\end{lemma}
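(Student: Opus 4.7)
The plan is to invoke the known characterisation of string-to-string \FO-transductions in terms of aperiodic streaming string transducers (SSTs), and then repackage the resulting device in the form required by the lemma. Specifically, results of Filiot, Krivine and Reutenauer (and its antecedents on SSTs by Alur and Cerny) show that every \FO-transduction $f : \Sigma^* \to \Gamma^*$ is computed by a deterministic copyless SST $T$ with some finite state set $Q$, $k$ registers storing words over $\Gamma$, a transition function $\delta : Q \times \Sigma \to Q$ whose transition monoid is aperiodic, and a register-update function $\eta : Q \times \Sigma \to \regmon{(\Gamma^*)}{k}$, whose output on input $w$ is read from register $1$ of the valuation obtained by applying the updates $\eta(q_0,a_1), \eta(q_1,a_2), \ldots, \eta(q_{n-1},a_n)$ to $\bar{\varepsilon}$, where $q_i = \delta(q_{i-1}, a_i)$. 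The target alphabet $\Delta$ of the lemma is then the (finite) image of $\eta$.

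Second, I would exhibit $g : \Sigma^* \to \Delta^*$ as an aperiodic rational function in the sense of Definition~\ref{def:aperiodic-rational-function}. Let $M$ be the transition monoid of $T$ and $h : \Sigma^* \to M$ the canonical homomorphism; since $T$'s transitions are aperiodic, $M$ is aperiodic. The rational function $g$ uses this $h$; its output function $out : M \times \Sigma \times M \to \Delta^*$ is defined by $out(m, a, m') = \eta(q_0 \cdot m, a)$, which depends only on the prefix (via its image $m$ in the aperiodic monoid) and the current letter, and produces the single update $\eta(q_0 \cdot m, a) \in \Delta$. Thus $g(a_1 \cdots a_n) = \eta(q_0,a_1)\, \eta(q_1,a_2)\cdots \eta(q_{n-1},a_n)$, exactly the sequence of updates performed by $T$.

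Third, the diagram commutes: by induction on $i$, after applying the first $i$ updates of $g(w)$ to $\bar\varepsilon$ one obtains the register valuation of $T$ after reading the prefix $a_1 \cdots a_i$. Hence the valuation obtained after applying all of $g(w)$ is the final valuation of $T$ on $w$, and its first component is $f(w)$, which is precisely what $\mathsf{projection}_1$ returns.

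The main obstacle is obtaining the normal form where the register updates are not merely nonduplicating (copyless) but also \emph{monotone} in the sense of the lemma, i.e.~the registers appear in strictly increasing order in the concatenation of the right hand sides. Copylessness is the standard register discipline of SSTs for regular functions, and aperiodicity of the transition monoid is exactly the first-order restriction. Monotonicity, however, is a canonical form requiring that at each step the registers be renamed in the order in which they are used. This can be enforced by fixing at each state $q \in Q$ a permutation $\sigma_q$ of $\{1,\ldots,k\}$ that records the current left-to-right order of registers in the final output expression, and composing every update with the appropriate permutations on input and output. Since the set of such permutations is finite, the resulting SST has the same aperiodic transition monoid (up to a finite extension preserving aperiodicity), the updates remain copyless, and by construction they are monotone; after this normalisation the above argument applies verbatim.
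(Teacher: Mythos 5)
Your overall architecture is the same as the paper's: invoke the aperiodic copyless SST characterisation of \FO-transductions, let $\Delta$ be the finite set of updates appearing on transitions, let $g$ output the sequence of updates along the run, and check commutativity of the diagram by induction on the prefix. The induction and the identification of $g$ as aperiodic rational are fine. The gap is in the monotonisation step, which is exactly the nontrivial part of the lemma.

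You propose to enforce monotonicity by fixing, for each state $q$ of the SST, a permutation $\sigma_q$ ``recording the current left-to-right order of registers in the final output expression.'' But that order is not a function of the current state: it is determined by the \emph{suffix} of the input still to be read. From one and the same state $q$, one continuation may combine the registers as $x_1 x_2$ and another as $x_2 x_1$ (already with $k=2$ and an aperiodic transition monoid), so no choice of $\sigma_q$ can be correct for both. Moreover the object you need is not a permutation but a sequence of at most $k$ distinct registers, since some registers may never reach the output and must be dropped. The paper resolves precisely this point with a regular look-ahead: it defines an aperiodic rational function $\tilde g(p,v)$ computing, from the current state $p$ and the \emph{remaining} word $v$, the ordered sequence of registers that will appear in the final output, enriches the state space with this information, and only then reorders the right-hand sides consistently from one step to the next. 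In your framework this means the output function of $g$ must genuinely use the suffix component $m'$ of the triple $(m,a,m')$ from Definition~\ref{def:aperiodic-rational-function}, whereas you explicitly make it depend on $(m,a)$ only. Without the look-ahead (or an equivalent device), the normalisation as you describe it fails, and with it the claim that the updates in $\Delta$ can be taken monotone.
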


\begin{pr}
We use an equivalent characterisation of \FO-transduction in terms of streaming 
string transducers (SST) (first shown for \MSO-transduction in~\cite{AlurC10}). By~\cite{FiliotKT14}, an \FO-transduction 
$f:\Sigma^* \to \Gamma^*$ can be computed by an SST $\mathbf{F}$ whose 
register updates are nonduplicating and whose transition monoid is aperiodic.
A possible definition for the transition monoid of an SST 
is given in~\cite{DartoisJR16} (called \textit{substitution transition monoid}).
Elements of the monoid are functions mapping a state $p$ of the SST to a pair $(q,r)$ formed with a state $q$ of the SST and a $k$-register update $r$ containing
only names of registers (and no element of $\Gamma$). Every word $u$ is mapped 
to such a function $g_u$ such that $g_u(p) = (q,r)$ if there is a run in the SST
on $u$ from state $p$ to state $q$ where the registers are updated according 
to $r$ with possibly elements of $\Gamma$ inserted in the products. 

We will first transform $\mathbf{F}$ to add a regular look-ahead which will guess
which registers are output at the end of the computation on a given word and 
in which order.
More precisely, there is a rational function $\tilde{g}$
taking as input a state $p$ of $\mathbf{F}$ and a word $v$ and which outputs the sequence of registers $\tilde{g}(p,v) = (r_1, r_2, \dotsm, r_\ell)$ which appears (respecting the order) in the update of the register output in $q$ in $r$ where $g_v(p) = (q,r)$.
Note that this sequence contains at most $k$ registers, all distinct.
Because the transition monoid of $\mathbf{F}$ is aperiodic, so is $\tilde{g}$.

Now, let $\mathbf{G}$ be the SST constructed from $\mathbf{F}$ and $\tilde{g}$
as follows:

States are pairs of a state $p$ of $\mathbf{F}$ and a sequence of at most $k$ 
distinct registers which corresponds to $\tilde{g}(p,v)$ for some word $v$. 
Transitions are similar as the ones in $\mathbf{F}$, such that:
\begin{itemize}
\item from a state $(p,\tilde{g}(p,uv))$, one can reach the state $(q,\tilde{g}(q,v))$ when reading $u$, where $q$ is the state reached from $p$ by 
reading $u$ in $\mathbf{F}$, 
\item the register updates are modified in such a way that the registers are reordered according to $\tilde{g}(p,v)$ so as to maintain monotone updates all along the computation.
\end{itemize}
We can moreover assume that the output register is always the first one.

The set $\Delta$ is now defined as the finite set of register updates on the transitions in $\mathbf{G}$, which are nonduplicating and monotone. The function 
$g$ is the function mapping a word $u$ to the 
sequence of register updates performed while reading $u$ in $\mathbf{G}$
from the initial state to the state $(q,\tilde{g}(q,\varepsilon))$, where 
$q$ is the state reached in $\mathbb{F}$ by reading $u$ from the initial state. 
The function $g$ is thus aperiodic rational.

Because $\mathbf{F}$ and $\mathbf{G}$ are equivalent, 
the function $f$ can be decomposed 
into the three functions as in the statement of the lemma.
\end{pr}

Thanks to Lemma~\ref{lem:strings-wlog} and the closure of first-order list function under composition, it is now sufficient to prove that the bottom three functions of Lemma~\ref{lem:decomposition-of-fo-transduction} are first-order list functions, in order to complete the proof of Theorem~\ref{thm:transductions}. This is the case of the function $\mathsf{projection}_1$ by definition and of the aperiodic rational function $g$ by Theorem~\ref{thm:sequential-functions}. We are thus left to prove the following lemma, which is the subject of Section~\ref{section:register}.

\begin{lemma}
\label{lem:product-operations-in-lf}
Let $\Gamma \in \types$, let $k$ be a positive integer and let $\Delta$ a finite subset of $\regmon {(\Gamma^*)} k$. The function from $\Delta^*$ to $(\Gamma^*)^k$ which maps a list of nonduplicating monotone $k$-register updates to the valuation obtained by applying these updates to the empty register valuation is a first-order list function.
\end{lemma}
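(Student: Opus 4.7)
The plan is to reduce the problem to a factorisation forest aggregation via Theorem~\ref{thm:simon-compute}. As a first step, I would observe that $\regmon{(\Gamma^*)}{k}$ is first-order definable inside $k$-tuples of lists over $\Gamma + \{1,\ldots,k\}$, hence is in $\typesfo$, and that the composition $(\eta_1,\eta_2) \mapsto \eta_1\eta_2$ --- obtained by substituting each register name appearing in a right-hand side of $\eta_2$ by the corresponding right-hand side of $\eta_1$ --- is a generalised first-order list function, built from projection, disjoint union, map, and $\mathsf{flat}$. Next, define the \emph{skeleton} of an update by forgetting all $\Gamma$-content. Because of monotonicity and nonduplication, a skeleton is faithfully encoded by a non-decreasing partial function $\phi : \{1,\ldots,k\} \rightharpoonup \{1,\ldots,k\}$ recording the destination right-hand side of each used register. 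The set $S$ of such functions is finite, forms a monoid under $(\phi_1,\phi_2) \mapsto \phi_2 \circ \phi_1$, and the skeleton map is a monoid homomorphism from $\regmon{(\Gamma^*)}{k}$ to $S$. Since non-decreasing partial functions on $\{1,\ldots,k\}$ stabilise within $k$ iterations, $S$ is aperiodic; the skeleton map $\Delta \to S$ is a function between finite sets, hence a first-order list function. Theorem~\ref{thm:simon-compute} then produces, via a first-order list function, a factorisation forest of the input word over this skeleton homomorphism, of some bounded depth $d$.

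I would then traverse this forest bottom-up, computing at each node the composition in $\regmon{(\Gamma^*)}{k}$ of the updates in its yield. Leaves carry their label; binary internal nodes use the composition above; the delicate case is an internal node whose children carry composed updates $\eta_1, \ldots, \eta_m$ (with $m \ge 3$) all sharing a common idempotent skeleton $e$. Let $F$ be the set of fixed points of the destination function $\phi$ of $e$. Idempotence forces $\phi^{-1}(i) \cap F = \{i\}$ for $i \in F$ and $\phi^{-1}(i) = \emptyset$ for $i \notin F$. Consequently, for each $i \in F$, the $i$-th right-hand side of every $\eta_j$ contains register $i$ exactly once, at a position $l_0(i)$ determined by $e$; every other register occurring there is outside $F$ and hence, once one more composition has been performed, becomes a fixed word (the corresponding right-hand side of the previous composition, which has no register references). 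A direct induction on $j$ yields
\begin{align*}
T_j(i) = L_j \cdot T_{j-1}(i) \cdot R_j,
\end{align*}
where $T_j(i)$ denotes the $i$-th right-hand side of $\eta_1 \cdots \eta_j$ and the $\Gamma^*$-words $L_j, R_j$ depend only on the pair $(\eta_{j-1}, \eta_j)$ through a fixed structural recipe: split the $i$-th right-hand side of $\eta_j$ at its register markers (using the list comma construction, Example~\ref{ex:comma-function}), project $\eta_{j-1}$ to its $g_{i,l}$-th components for the non-$F$ indices, and concatenate these in the prescribed interleaving. Unrolling gives $T_m(i) = L_m L_{m-1} \cdots L_2 \cdot T_1(i) \cdot R_2 \cdots R_m$, in which $T_1(i)$ (the $i$-th right-hand side of $\eta_1$) is the sole occurrence of register references. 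Thus I compute the sequence of pairs via windows of size $2$ (Example~\ref{ex:window}), map each window to its $L_j$ (resp.\ $R_j$), reverse the $L$-list, flatten, and concatenate the pieces around $T_1(i)$ using $\mathsf{flat}$ and list concatenation (Example~\ref{ex:list-concatenation}). For $i \notin F$, the composed $i$-th right-hand side is simply that of $\eta_m$, extracted via $\mathsf{last}$. Pairing over all $i$ gives the composed update at the node, and the bounded depth of the forest ensures that the whole traversal is a fixed composition of generalised first-order list functions.

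Finally, I would apply Lemma~\ref{lem:right-action} with the fixed valuation $\bar \varepsilon$ to the composed update produced at the root, yielding the required output in $(\Gamma^*)^k$; the empty input list is handled separately via an if-then-else construction (Example~\ref{ex:if-then-else}) returning $\bar\varepsilon$. The heart of the argument --- and the main obstacle --- is the idempotent case: establishing the $L$--$R$ decomposition for compositions of same-skeleton idempotent updates, and verifying that the resulting pattern $L_m \cdots L_2 \cdot T_1(i) \cdot R_2 \cdots R_m$ fits the ``windowed, reverse-then-concatenate'' scheme implementable by first-order list functions. Everything else --- the composition of two updates, the factorisation forest, and the final application of the root update --- follows from the combinators and earlier constructions in the paper.
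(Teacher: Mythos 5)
Your overall route coincides with the paper's: reduce to the product operation of $\regmon {(\Gamma^*)} k$, show the binary product is a generalised first-order list function, abstract each update to its register ``skeleton'' (the paper calls this the \emph{abstraction}, landing in a finite aperiodic monoid $T_k$), compute a factorisation forest for this homomorphism via Theorem~\ref{thm:simon-compute}, and aggregate bottom-up, with the real work at high-degree nodes where the unrolled product has the shape $L_m \cdots L_2 \cdot T_1(i) \cdot R_2 \cdots R_m$ computed by map, reverse, flatten and concatenation --- exactly the paper's $\leftsub$/$\rightsub$ construction for the one-register case, finished off by Lemma~\ref{lem:right-action} applied to $\bar\varepsilon$.

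There is, however, one genuine gap: you treat the degree-$\ge 3$ case only when the common skeleton $e$ is \emph{idempotent}, and your structural claims ($\phi^{-1}(i)\cap F=\{i\}$ for fixed points, $\phi^{-1}(i)=\emptyset$ otherwise, hence $L_j,R_j$ determined by the window $(\eta_{j-1},\eta_j)$) all rest on idempotence. But the $h$-factorisations of Theorem~\ref{thm:simon-compute}, as defined and constructed in this paper, do \emph{not} guarantee that the shared label of children of a high-degree node is idempotent --- the base case of its proof already produces a flat tree whose children all carry an arbitrary element $s$ of the aperiodic monoid. Correspondingly, condition 2 of Lemma~\ref{lem:simon-strategy} demands a product function for lists homogeneous under \emph{any} value of $h$. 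For a non-idempotent non-decreasing $\phi$, a temporary (non-fixed) register can receive input from another temporary register, so its value in the running product depends on up to $k$ preceding updates rather than on the current one alone, and your windows of size $2$ no longer suffice; note also that you cannot repair this by regrouping the list into blocks of length $k$ to force idempotence, since block-counting is not first-order. The fix is the one the paper uses: observe that chains of temporary registers have length at most $k$, compute their values with windows of size $k$ (Claim~\ref{claim:only-temporary}, via the extension of Example~\ref{ex:window}), note that each weakly connected component of the register graph has at most one non-temporary register, and then run your $L$--$R$ unrolling (the one-register argument) for that register. With that adjustment your argument goes through and is essentially the paper's proof.
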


\section{The register update monoid}
\label{section:register}
The goal of this section is to prove Lemma~\ref{lem:product-operations-in-lf}.
We will prove a stronger result which also works for a monoid other than $\Gamma^*$, provided that its universe is a first-order definable set and its product operation is a generalised first-order list function. This result is obtained as a corollary of Theorem~\ref{thm:regmon} below. In order to state this theorem formally, we need to view the product operation: $(\regmon M k)^* \to \regmon M k$ as a generalised first-order list function. The domain of the above operation is  in  $\typesfo$ from Definition~\ref{def:typesfo}, since being monotone and nonduplicating are first-order definable properties. 

\begin{theorem}
\label{thm:regmon} 
Let $M$ be a monoid whose universe is in $\typesfo$  and whose product operation is a generalised first-order list function. Then the same is true for $\regmon M k$, for every $k \in \set{0,1,\ldots}$. 
\end{theorem}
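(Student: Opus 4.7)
The plan is to show both that $\regmon M k$ lies in $\typesfo$ and that its iterated product is a generalised first-order list function. The first part is routine: a tuple in $((M + \{1,\ldots,k\})^*)^k$ is monotone and nonduplicating precisely when the sequence of register labels obtained by concatenating the right-hand sides and filtering out the $M$-letters is strictly increasing over the finite alphabet $\{1,\ldots,k\}$; combining Example~\ref{ex:filter} with Corollary~\ref{cor:products} then gives $\regmon M k \in \typesfo$ from $M \in \typesfo$.

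For the product my main tool is a skeleton abstraction. I would define $T_k$ to be the finite set of $k$-register updates carrying no $M$-letters, equivalently pairs $(D, \mathrm{idx})$ with $D \subseteq \{1,\ldots,k\}$ and $\mathrm{idx} : D \to \{1,\ldots,k\}$ non-decreasing (the latter forced by monotonicity). Composition in $T_k$ is $\mathrm{idx}_{\sigma\tau} = \mathrm{idx}_\tau \circ \mathrm{idx}_\sigma$ on the restricted domain $\{j \in D_\sigma : \mathrm{idx}_\sigma(j) \in D_\tau\}$. The monoid $T_k$ is aperiodic: the iterated domain $D_{\sigma^n}$ is non-increasing and stabilises at some $D_\infty$, and on $D_\infty$ the map $\mathrm{idx}_\sigma$ is a non-decreasing self-map of a finite totally-ordered set, which has only fixed-point orbits (any would-be cycle has a minimal element $j^*$ satisfying $j^* \leq \mathrm{idx}_\sigma(j^*) \leq \mathrm{idx}_\sigma^2(j^*) \leq \cdots$, so $\mathrm{idx}_\sigma(j^*) = j^*$), hence the iterates of $\mathrm{idx}_\sigma$ stabilise. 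The skeleton map $\mathrm{sk}:\regmon M k \to T_k$ is a first-order list function since it merely erases the $M$-letters. Applying Theorem~\ref{thm:simon-compute} to $\mathrm{sk}$ then produces, for any input list of updates, a bounded-depth $\mathrm{sk}$-factorisation that I will traverse bottom-up by map.

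The bottom-up traversal requires two building blocks. The binary product $(\regmon M k)^2 \to \regmon M k$ is obtained by branching on the finite pair of skeletons $(\sigma_1, \sigma_2) \in T_k \times T_k$ using if-then-else (Example~\ref{ex:if-then-else}): inside each branch both skeletons are fixed, so the product is a specific pre-determined concatenation of blocks of $M$-values extracted from the two operands via $\mathsf{block}$, projection, and the list-concatenation of Example~\ref{ex:list-concatenation}. The hard part is the iterated product $\mathrm{Prod}_e$ for internal factorisation nodes whose children all share an idempotent skeleton $e$. Fixing such an $e$ with right-hand side patterns $e_i = [j^i_1,\ldots,j^i_{p_i}]$ and representing each input update $\eta_t$ by its tuple of $M$-values $u^t_{i,s}$ (one between every two consecutive registers in each RHS), a direct two-step unfolding of the composition shows that every slot of $\eta_1\cdots\eta_n$ has the form $A \cdot \prod_t B_t \cdot C$, where $A$ and $C$ are boundary $M$-products taken from positions near $1$ or $n$, and each $B_t$ is a fixed-size $M$-product over a bounded window of positions around $t$; the window width and the particular $u^{\bullet}_{\bullet,\bullet}$ used are determined entirely by $e$ through the orbit structure of $\mathrm{idx}_e$ on $D_e$. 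Each such list is assembled using an $O(1)$-size extension of the window function of Example~\ref{ex:window}, together with $\mathsf{reverse}$, projections and disjoint union, while the final contraction is performed by the product of $M$, which is a generalised first-order list function by hypothesis. Running this slot by slot over the finitely many slots of $e$, and then branching over the finitely many idempotents of $T_k$, yields $\mathrm{Prod}_e$ and, through the bottom-up traversal of the Simon factorisation, the full iterated product, completing the theorem.
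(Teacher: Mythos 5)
Your proposal follows the same overall strategy as the paper's proof: a skeleton/abstraction homomorphism into a finite monoid $T_k$, a factorisation forest computed via Theorem~\ref{thm:simon-compute}, and a bottom-up evaluation built from a binary product plus a product for homogeneous lists -- this is exactly the paper's Lemma~\ref{lem:simon-strategy} combined with Lemma~\ref{lem:idempotent-product}. Two points of divergence are worth noting. First, you prove aperiodicity of $T_k$ explicitly (a non-decreasing self-map of a finite chain has only fixed-point orbits); the paper leaves this implicit even though it is needed to invoke Theorem~\ref{thm:simon-compute}, so this is a welcome addition. Second, for the homogeneous product you derive a direct closed form (each segment is $A \cdot \prod_t B_t \cdot C$ with $B_t$ a product over a bounded window around position $t$), whereas the paper reduces to the one-register case: it first shows (Claim~\ref{claim:only-temporary}) that registers without a self-loop depend only on the last $k$ updates and can be pre-computed with the window function, and then handles the single remaining self-loop register with a dedicated one-register argument ($\leftsub$/$\rightsub$, reverse, flatten). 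Your windowed-block form is the same phenomenon written out in one step, and your appeal to $\mathsf{reverse}$ for the segment to the left of the surviving register matches the paper's reversal of $[\leftsub(\eta_n),\ldots,\leftsub(\eta_1)]$.

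One genuine issue to repair: you define $\mathrm{Prod}_e$ only for \emph{idempotent} skeletons $e$, but the paper's notion of $h$-factorisation (and the factorisations actually produced by Theorem~\ref{thm:simon-compute}, e.g.\ in its base case, where the root has $n$ children all labelled by an arbitrary $s$) only guarantees that the children of a high-degree node share a common label, not that this label is idempotent. You must therefore handle $\tau$-homogeneous lists for arbitrary $\tau \in T_k$, as the paper's Lemma~\ref{lem:idempotent-product} does. Fortunately your analysis extends with essentially no change: the temporary/permanent dichotomy (out-degree at most $1$, only self-loops as cycles, hence at most one non-temporary register per weakly connected component) holds for every monotone nonduplicating skeleton, the skeletons of prefix products stabilise after at most $k$ factors by aperiodicity of $T_k$, and the boundary-plus-windowed-blocks form of each segment persists. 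With that adjustment the argument is sound.
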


Lemma~\ref{lem:product-operations-in-lf} follows from the above theorem applied to $M = \Gamma^*$, and from Lemma~\ref{lem:right-action}. Indeed, given $\Gamma \in \types$, the universe of $\regmon {\Gamma^*} k$ is in $\typesfo$ and its product operation is a generalised first-order list function by Theorem~\ref{thm:regmon}. The function from Lemma~\ref{lem:product-operations-in-lf} is then the composition of the product operation in $\Delta^*$ (which corresponds to the product operation in $\regmon {\Gamma^*} k$), which transforms a list of updates from $\Delta$ into an update of $\regmon {\Gamma^*} k$, and the evaluation of this update on the empty register valuation. This last function is a generalised first-order list function by Lemma~\ref{lem:right-action} with $v=\bar \varepsilon$. 
Implicitly, we use the fact that the right action is compatible with the monoid structure of $\regmon M k$, i.e.
\begin{align*}
  v (\eta_1 \eta_2) = (v \eta_1) \eta_2 \qquad \mbox{for $v \in M^k$ and $\eta_1,\eta_2 \in \regmon M k$}.
\end{align*}
Summing up, we have proved that the function of type $\Delta^* \to (\Gamma^*)^k$ discussed in Lemma~\ref{lem:product-operations-in-lf} is a generalised first-order list function. Since its domain and co-domain are in $\types$, it is also a first-order list function. This completes the proof of Lemma~\ref{lem:product-operations-in-lf}.

It remains to prove Theorem~\ref{thm:regmon}. 
We do this using factorisation forests, with our proof strategy encapsulated in the following lemma, using the notion of \emph{homogeneous lists}: a list $[x_1,\ldots,x_n]$ is said to be homogeneous under a function $h$ if $h(x_1)= \cdots = h(x_n)$.

\begin{lemma}
\label{lem:simon-strategy}
Let $P$ be a monoid whose universe is in $\typesfo$. The following conditions are sufficient for the product operation to be a generalised first-order list function:
\begin{enumerate}
	\item the binary product $P \times P \to P$ is a generalised first-order list function; and
	\item there is a monoid homomorphism $h : P \to T$, with $T$ a finite aperiodic monoid, and a generalised first-order list function $P^* \to P$ that agrees with the product operation of $P$ on all lists that are homogeneous under~$h$.
\end{enumerate}
\end{lemma}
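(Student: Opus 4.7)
The plan is to evaluate the product by first computing an $h$-factorisation of the input list and then evaluating this factorisation bottom-up in $P$. For the empty list I would return the neutral element of $P$ as a constant first-order list function, distinguished from nonempty inputs via the if-then-else construction of Example~\ref{ex:if-then-else} keyed on $\mathsf{len}_1$. For a nonempty input $[x_1,\ldots,x_n] \in P^+$, I would apply Theorem~\ref{thm:simon-compute} to the homomorphism $h : P \to T$ (viewing $T$ as a finite aperiodic semigroup and working with the restriction of $h$ to $P$), producing a depth bound $k \in \Nat$ and a generalised first-order list function $P^+ \to \trees_k(P,T)$ that outputs an $h$-factorisation of the input.

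The crucial observation is that the $T$-label of any node $u$ in such an $h$-factorisation equals $h(v(u))$, where $v(u) \in P$ denotes the value obtained by the bottom-up evaluation of $u$ in $P$ (leaf values are their $P$-labels; the value of an internal node is the $P$-product of its children's values). This is an easy induction on depth using that $h$ is a monoid homomorphism. Consequently, at any node with at least three children — which by the definition of $h$-factorisation all share the same $T$-label — the list $[v(u_1),\ldots,v(u_m)] \in P^+$ of children's $P$-values is $h$-homogeneous, and so condition (2) of the lemma furnishes a generalised first-order list function that correctly computes its $P$-product. At nodes with exactly two children, the binary product from condition (1) applies.

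To implement the bottom-up evaluation as a generalised first-order list function, I would define, by induction on $i = 0,1,\ldots,k$, a generalised first-order list function $\mathsf{eval}_i : \trees_i(P,T) \to P$. The base case $\mathsf{eval}_0$ is the identity (Example~\ref{ex:identity}). For the inductive step, $\trees_{i+1}(P,T) = \trees_i(P,T) + T \times (\trees_i(P,T))^+$, so I would use the disjoint union combinator: on the left summand apply $\mathsf{eval}_i$, and on the right summand first project away the $T$-label, then map $\mathsf{eval}_i$ over the children to obtain a list in $P^+$, and finally use an if-then-else keyed on $\mathsf{len}_3$ (Examples~\ref{ex:length}, \ref{ex:if-then-else}) to branch on whether this list has length $2$ (in which case apply Example~\ref{ex:pair-to-list} to convert to a pair and then the binary product) or length at least $3$ (apply the homogeneous-product function provided by condition (2)). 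Composing $\mathsf{eval}_k$ with the factorisation function from Theorem~\ref{thm:simon-compute} yields the desired product operation on nonempty lists.

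The main obstacle is largely bookkeeping: threading the disjoint union $\trees_{i+1} = \trees_i + T \times \trees_i^+$ through the map and if-then-else combinators, and converting two-element lists to pairs without disturbing the overall type discipline. One genuinely subtle point is that invoking Theorem~\ref{thm:simon-compute} requires $h : P \to T$ itself to be a (generalised) first-order list function — equivalently, each fibre $h^{-1}(t) \subseteq P$ must be first-order definable. This is implicit in the intended uses of the lemma (the $h$ produced when proving Theorem~\ref{thm:regmon} is manifestly first-order definable), and should be read as an additional hypothesis; once it is granted, the construction above goes through.
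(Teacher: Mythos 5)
Your proposal is correct and follows essentially the same route as the paper's own proof: compute an $h$-factorisation via Theorem~\ref{thm:simon-compute}, then evaluate it bottom-up by induction on the (bounded) depth, using condition~(1) at binary nodes and condition~(2) at nodes of degree at least three. You additionally make explicit a point the paper leaves implicit, namely that invoking Theorem~\ref{thm:simon-compute} requires $h$ itself to be a generalised first-order list function; this is a fair reading of the lemma's intended hypotheses and holds in its application to the abstraction homomorphism of $\regmon M k$.
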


\begin{pr}
Our goal is to compute the product of a list $x \in P^*$. Consider $h$ and $T$ as given in  condition \textbf{2.} and compute an $h$-factorisation of $x$ with a first-order list function using Theorem~\ref{thm:simon-compute}. The depth of such a tree is bounded by a constant depending only on $T$. Then, by induction on the depth, one can prove that there is a generalised first-order list function computing the product of the labels of the leaves, using condition \textbf{1.} to deal with nodes of degree $2$ and condition \textbf{2.} to deal with nodes of degree at least $3$. By composition, we get that the product operation of $P^*$ is a generalised first-order list function.
\end{pr}

In order to prove Theorem~\ref{thm:regmon}, it suffices to show that if a monoid $M$ satisfies the assumptions of Theorem~\ref{thm:regmon}, then the monoid $\regmon M k$ satisfies conditions 1 and 2 in Lemma~\ref{lem:simon-strategy}. Let us fix for the rest of this section a monoid $M$ which satisfies the assumptions of Theorem~\ref{thm:regmon}, i.e.~its  universe is in $\typesfo$ and its product operation is a generalised first-order list function. Condition 1 of Lemma~\ref{lem:simon-strategy} for $\regmon M k$ is easy.
Indeed, consider two elements of $\regmon M k$, say $u=[u_1,u_2,\ldots,u_k]$ and $v=[v_1,v_2,\ldots,v_k]$. The $u_i$'s and $v_i$'s are lists of elements in $M\cup\{1,\ldots,k\}$. To obtain the product, we need to replace every occurrence of $j \in \{1,\ldots,k\}$ in the $v_i$'s by $u_j$. Because $\{1,\ldots,k\}$ is finite, using the if-then-else construction, one can prove that the function from $\{1,\ldots,k\} \times \regmon M k$ associating $(i,u)$ with the register update $\mathsf{projection}_i(u)$ is a generalised first-order list function. 
Then, as a first step, replace every element $s$ of $M$ in the $v_i$'s by the singleton list $[s]$. Then, replace every element $j$ of $\{1,\ldots,k\}$ in the $v_i$'s by the pair $(j,u)$. Finally apply the generalised first-order function, as defined above, to those elements. The desired list is obtained by flattening. 

We  focus now on condition 2, i.e.~showing that the product operation can be computed by a first-order list function, for lists which are homogeneous under some homomorphism into a finite monoid. For this, we need to find the homomorphism $h$. For a $k$-register update $\eta$, define $h(\eta)$, called its  \emph{abstraction}, to be the same as $\eta$, except that  all monoid elements are removed from the right hand sides, as in the following picture:
\mypic{6}
Intuitively, the abstraction only says which registers are moved to which ones, without saying what new monoid elements (in blue in the picture) are created. 
Having the same abstraction is easily seen to be a congruence on $\regmon M k$, and therefore the set of abstractions, call it $T_k$, is itself a finite  monoid, and the abstraction function $h$ is a monoid homomorphism. We say that a list $[x_1,\ldots,x_n]$ in $\regmon M k$ is $\tau$-homogeneous for some $\tau$ in $T_k$ if it is homogeneous under the abstraction $h$ and $\tau = h(x_1)= \cdots = h(x_n)$.
We claim that item 2 of Lemma~\ref{lem:simon-strategy} is satisfied when using the abstraction homomorphism. 

\begin{lemma}
\label{lem:idempotent-product} 
Given a non-negative integer $k$ and $\tau \in T_k$, there is a generalised first-order list function from $(\regmon M k)^*$ to $\regmon M k$ which agrees with the product in the monoid $\regmon M k$ for arguments which are $\tau$-homogeneous.
\end{lemma}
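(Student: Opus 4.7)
The plan is as follows. Since $T_k$ is finite, by Example~\ref{ex:if-then-else} it suffices to construct, for each fixed $\tau \in T_k$ separately, a generalised first-order list function which agrees with the product operation on $\tau$-homogeneous inputs. First I would reduce to the case of idempotent $\tau$. The monoid $T_k$ is aperiodic: by the monotonicity and nonduplication constraints the abstraction map factors through the monoid of weakly monotone partial self-maps of $\{1,\ldots,k\}$ under composition, and this monoid is aperiodic because iterated application of such a map converges (every orbit eventually enters a fixed point, as a monotone bijection of a totally ordered set to itself would have to be the identity). Hence some power $\tau^N$ is idempotent. Given a $\tau$-homogeneous input, I group its entries into consecutive blocks of size $N$ (with a possibly shorter tail), replace each block by its product using at most $N-1$ applications of the binary product on $\regmon M k$ (condition~1 of Lemma~\ref{lem:simon-strategy}, already verified for $\regmon M k$), and am left with a $\tau^N$-homogeneous list of block products, to which I apply the idempotent case; finally I multiply by the tail block's product.

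Suppose then that $\tau$ is idempotent, with fixed-point set $F \subseteq \{1,\ldots,k\}$. Monotonicity, nonduplication and idempotency force a rigid shape: for $i \in F$, the list $\tau(i)$ contains $i$ at exactly one position $a_0$, and its other entries lie in $\mathrm{dom}(R_\tau)\setminus F$; for $i'\notin F$, $\tau(i')$ is empty. Hence any $\eta_\ell$ of abstraction $\tau$ is determined by a fixed tuple of monoid elements: slots $w_0^{\ell,i},\ldots,w_{m_i}^{\ell,i}$ with $\eta_\ell(i) = w_0^{\ell,i} \cdot j_1 \cdot w_1^{\ell,i} \cdots j_{m_i} \cdot w_{m_i}^{\ell,i}$ for each $i \in F$, and a single element $x^{\ell,i'} = \eta_\ell(i')$ for each $i' \notin F$. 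Since $\tau$ is fixed, each coordinate is extractable from $\eta_\ell$ by a generalised first-order list function.

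A short induction on $n$ then yields the shape of $\pi_n = \eta_1 \cdots \eta_n$: for $i' \notin F$, $\pi_n(i') = \eta_n(i')$ (obtainable via $\mathsf{last}$); and for $i \in F$, $\pi_n(i) = W_0^{n,i} \cdot j_1 \cdot W_1^{n,i} \cdots j_{m_i} \cdot W_{m_i}^{n,i}$ with the middle slots frozen ($W_a^{n,i} = w_a^{1,i}$ for $0 < a < m_i$) and the end slots obeying $W_0^{n,i} = P_n^i \cdot W_0^{n-1,i}$ and $W_{m_i}^{n,i} = W_{m_i}^{n-1,i} \cdot Q_n^i$, where $P_n^i$ and $Q_n^i$ are fixed words in the extracted monoid elements of the pair $(\eta_{n-1},\eta_n)$ only. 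Unrolling, $W_0^{n,i} = P_n^i P_{n-1}^i \cdots P_2^i \cdot w_0^{1,i}$ and dually for $W_{m_i}^{n,i}$. Each such slot value is therefore computable by forming windows of size two on the input list (Example~\ref{ex:window}), combining the extracted monoid elements within each window using the hypothesised product of $M$, taking a (possibly reversed) list product, and appending the contribution of the first or last entry via $\mathsf{head}$ or $\mathsf{last}$. Reassembling the slot values into an element of $\regmon M k$ of fixed shape $\tau$ is routine.

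The main obstacle is verifying the slot recurrences. One has to trace through the substitution $\pi_n(i) = w_0^{n,i} \cdot \pi_{n-1}(j_1) \cdot w_1^{n,i} \cdots w_{m_i}^{n,i}$, using the inductive fact that $\pi_{n-1}(j_a) = x^{n-1,j_a}$ for $a \neq a_0$ (because such $j_a$ lies in $\mathrm{dom}(R_\tau)\setminus F$) while $\pi_{n-1}(j_{a_0}) = \pi_{n-1}(i)$ exposes the next layer of register references, and then reading off that only the two end slots evolve and that each step of this evolution depends only on two consecutive entries of the input list. Once this bookkeeping is in place, the rest is a routine composition of already-available generalised first-order list functions.
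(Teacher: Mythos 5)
Your treatment of the idempotent case is essentially sound and, in substance, rediscovers the paper's machinery: the registers without a self-occurrence in $\tau$ are what the paper calls \emph{temporary}, their value after $n$ updates is just the value given by $\eta_n$ (extractable with $\mathsf{last}$), and your recurrence $\pi_n(i)=P^i_n\,\pi_{n-1}(i)\,Q^i_n$ for a self-referencing register, with $P^i_n,Q^i_n$ read off a window of two consecutive updates, is exactly the one-register case that the paper solves with $\leftsub$/$\rightsub$, windows of size two, map, reverse, flatten and the product of $M$. Your structural claims about idempotent monotone nonduplicating abstractions (register-free right-hand sides off the fixed set, exactly one self-occurrence on it) do follow from nonduplication plus idempotency, and it is true that some power $\tau^N$ is idempotent, since $\tau^n$ is determined by the $n$-th iterate of the weakly monotone partial map sending each register to the unique right-hand side containing it, and iterates of such a map stabilise.

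The gap is in the very first step, the reduction of a general $\tau$ to the idempotent $\tau^N$. Grouping a list into consecutive \emph{non-overlapping} blocks of size $N$ is not a first-order list function for $N\ge 2$: composing it with $\mathsf{last}$ and $\mathsf{len}_N$ would return the length of the input modulo $N$, hence decide languages such as ``the input has even length,'' which are not aperiodic and therefore not definable by \FO-transductions; by the already-established left-to-right direction of Theorem~\ref{thm:transductions}, no first-order list function can perform this chunking. Overlapping windows of any fixed size are available (Example~\ref{ex:window}), but non-overlapping chunking is precisely the modulo counting that the first-order setting forbids, so the list of ``block products'' you want to hand to the idempotent case cannot be produced. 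The paper avoids passing to an idempotent power altogether: for an arbitrary $\tau$ it observes that every register without a self-loop in the register-flow graph depends only on the last $k$ updates (computable with overlapping windows, Claim~\ref{claim:only-temporary}), that each weakly connected component of that graph contains at most one self-looping register, and that after substituting in the temporary values this register reduces to the one-register case. Your idempotent analysis is the special case of this argument in which every non-self-looping register depends only on the last single update; to repair your proof you should run the same slot/recurrence bookkeeping for general $\tau$ with windows of size $k+1$ instead of first reducing to an idempotent.
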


Since there are finitely many abstractions, and a generalised first-order list function can check if a list is $\tau$-homogeneous, the above lemma yields item 2 of Lemma~\ref{lem:simon-strategy} using a case disjunction as described in Example~\ref{ex:if-then-else}. Therefore, proving the above lemma finishes the proof of Theorem~\ref{thm:regmon}, and therefore also of Theorem~\ref{thm:transductions}. The rest of the section is devoted to the proof of Lemma~\ref{lem:idempotent-product}. In Section~\ref{section:one-register}, we prove the special case of Lemma~\ref{lem:idempotent-product} when $k=1$, and in Section~\ref{section:more-registers}, we deal with the general case (by reducing it to the case $k=1$).

\subsection{Proof of Lemma~\ref{lem:idempotent-product}: One register}
\label{section:one-register}

Let us first prove the special case of Lemma~\ref{lem:idempotent-product} when $k=1$. In this case, there are two possible abstractions:
\mypic{18}

The right case is easy to deal with: in that case, the product of a sequence of elements in $(\regmon M k)^*$ which are $\tau$-homogeneous, is equal to the last element of the list. This can be obtained using the first-order list function $\mathsf{last}$ from Example~\ref{ex:headtail}.

Let us consider now the more interesting left case; fix $\tau$ to be the left abstraction above. Here is a picture of a list $[\eta_1,\ldots,\eta_n] \in (\regmon M 1)^\star$ which is $\tau$-homogeneous:
\mypicsmall{5}

Our goal is to compute the product of such a list, using a generalised first-order list function.  For $\eta \in \regmon M 1$ define $\leftsub(\eta)$ (respectively, $\rightsub(\eta)$) to be the  list in $M^*$ of monoid elements that appear in $\eta$ before (respectively, after) register 1.  Here is a picture
\mypic{23}


Using the list comma function from Example~\ref{ex:comma-function}, one can transform $\eta$ by grouping the elements of $M$ in lists, using registers as separators. This way, $\eta$ is transformed into the list of lists $[\leftsub(\eta),\rightsub(\eta)]$. Using $\mathsf{head}$ and $\mathsf{last}$, we get:

\begin{claim}\label{claim:implicit-normalisation}
Both $\leftsub$ and $\rightsub$  are generalised first-order list functions $\regmon M 1 \to M^*$.
\end{claim}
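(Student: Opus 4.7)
The plan follows the hint in the paragraph preceding the claim. An element $\eta \in \regmon M 1$ is a list of type $(M + \{1\})^*$ in which the symbol $1$ denotes the unique register name; since updates in $\regmon M 1$ are nonduplicating, the symbol $1$ occurs at most once in $\eta$.

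First I would apply the list comma function from Example~\ref{ex:comma-function}, with $\Sigma := M$ and $\Gamma := \{1\}$, treating the register name as a separator. When $\eta$ does contain a register occurrence, i.e.~we are in the ``left case'' of the abstraction $h$, this function produces exactly the two-element list $[\leftsub(\eta), \rightsub(\eta)] \in M^{**}$, and composing with $\mathsf{head}$ and $\mathsf{last}$ respectively (Example~\ref{ex:headtail}) then yields $\leftsub$ and $\rightsub$.

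The only wrinkle is the ``right case'', in which $\eta$ contains no register occurrence and the list comma function returns a singleton list. Whether $\eta$ contains the register symbol is a first-order definable property of $\eta$: one can map each list element to $0$ or $1$ according to which summand of $M + \{1\}$ it belongs to (the disjoint union of two constant functions, lifted through $\mathsf{map}$), and then use Corollary~\ref{cor:products} to test membership in the regular language of $\{0,1\}^*$ words that contain a $1$. Using the if-then-else combinator of Example~\ref{ex:if-then-else}, we branch on this test: in the left case we proceed as above, and in the right case we fix any convention (for instance $\leftsub(\eta) := \eta$ and $\rightsub(\eta) := []$), since these values play no role in the sequel. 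The resulting pipeline is a composition of generalised first-order list functions, which gives the claim. I do not anticipate any serious obstacle; the argument is essentially an assembly of previously established constructions.
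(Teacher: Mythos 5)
Your proposal is correct and matches the paper's own argument, which is exactly the one-line construction you describe: apply the list comma function of Example~\ref{ex:comma-function} with the register name as separator to obtain $[\leftsub(\eta),\rightsub(\eta)]$, then compose with $\mathsf{head}$ and $\mathsf{last}$. Your extra if-then-else branch for updates containing no register occurrence is a reasonable tidying-up that the paper silently skips, since the claim is only ever invoked on $\tau$-homogeneous lists for the abstraction in which register~$1$ does occur.
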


Let $s,t \in M^*$ be the respective flattenings of the lists 
$$[\leftsub(\eta_n),\ldots,\leftsub(\eta_1)] \quad \text{ and } \quad [\rightsub(\eta_1),\ldots,\rightsub(\eta_n)].$$
Note the reverse order in the first list. Here is a picture:
\mypicsmall{14}
The function $[\eta_1,\ldots,\eta_n] \mapsto (s,t)$ is  a generalised first-order list function, using Claim~\ref{claim:implicit-normalisation}, map, reversing, flattening and pairing.
The product of the 1-register valuations $[\eta_1,\ldots,\eta_n]$ is the register valuation where the (only) right hand side is the concatenation of $s,[1],t$. Therefore, this product can be computed by a generalised first-order list function.

This completes the proof of Lemma~\ref{lem:idempotent-product} in the case of $k=1$, in particular we now know that Theorem~\ref{thm:regmon} is true for $k=1$. 

\subsection{Proof of Lemma~\ref{lem:idempotent-product}: More registers}
\label{section:more-registers}

We now prove the general case of Lemma~\ref{lem:idempotent-product}. Let $\tau \in T_k$ be an abstraction. We need to show that a generalised first-order list function can compute the product operation of $\regmon M k$ for inputs that are $\tau$-homogeneous.  Our strategy is to use homogeneity to reduce to the case of one register, which was considered in the previous section.
As a running example (for the proof in this section) we  use the following $\tau$: 
\mypic{24}
Define $G$ to be a directed graph where the vertices are the registers $\set{1,\ldots,k}$ and which contains an edge $i \leftarrow j$ if the $i$-th element of the abstraction $\tau$ contains register $j$, i.e.~the new value of register $i$ after the update uses register $j$. Here is a picture of the graph $G$ for our running example:
\mypic{25}
Every vertex in the graph has outdegree at most one (because $\tau$ is nonduplicating) and the only types of cycles are self-loops (because $\tau$ is monotone). Because registers that are in different weakly connected components do not interact with each other and then can be treated separately, without loss of generality we can assume that $G$ is weakly connected (i.e.~it is connected after forgetting the orientation of the edges).

Consider a $\tau$-homogeneous list $[\eta_1,\ldots,\eta_n]$ of $k$-register updates. Here is a picture for our running example:
\mypicsmall{26}
A register $i \in \set{1,\ldots,k}$ is called \emph{temporary} if it does not have a self-loop in the graph $G$ (we will also say that the vertex is temporary). In our running example, the temporary registers are 1,3 and 4. Because the outdegree of all the vertices in $G$ is at most $1$, if a vertex has an incoming edge from a different vertex in the graph then this latter must be temporary.  
The key observation about temporary registers is that their value depends only on the last $k$ updates, as shown in the following picture: \mypicsmall{27}
Indeed, an incoming edge in a temporary vertex $i$ must come from a different temporary vertex, so the value in $i$ depends only on the values of the temporary registers corresponding to the vertices in simple paths without self-loop reaching $i$, and thus which occur in the last $k$ updates. 

Because the temporary registers depend only on the recent past, the values of temporary registers can be computed using a generalised first-order list function (as formalised in Claim~\ref{claim:only-temporary} below). 

\begin{claim}
\label{claim:only-temporary}
Assume that $\tau \in \absmon k$ is such that all the registers are temporary. Consider the function:
\begin{align*}
\xymatrix@C=2cm{ (\regmon M k)^* \cap \text{$\tau$-homogeneous} \ar[r]^-f &  (\regmon M k)^* }
\end{align*}
which maps an input $[\eta_1,\ldots,\eta_n]$ to the list $[\eta'_1,\ldots,\eta'_n]$ where $\eta'_i$ is equivalent to the product of the prefix $\eta_1 \cdots \eta_i$. Then $f$ is a generalised first-order list function.
\end{claim}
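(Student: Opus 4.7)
The strategy is to exploit the bounded depth of the dependency graph $G$ in order to reduce each prefix product to a bounded-length local computation.

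Since all $k$ vertices of $G$ are temporary, $G$ has no self-loops. Combined with the outdegree-one and monotonicity constraints, $G$ is a forest in which every directed path has length at most $k$. Interpreting composition inside $T_k$, this means that for every $m \geq k$, the $j$-th right-hand side of $\tau^m$ contains no register symbols at all: following the chain of substitutions backwards from register $j$, the chain terminates within at most $k$ steps at a vertex with no outgoing edge in $G$. Consequently, any element $\nu \in \regmon M k$ whose abstraction equals $\tau^k$ is left-absorbing: for every $\mu \in \regmon M k$, one has $\mu \cdot \nu = \nu$, because no right-hand side of $\nu$ contains a register occurrence to be substituted by a right-hand side of $\mu$. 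Applied to a $\tau$-homogeneous input, this yields $\eta_1 \cdots \eta_i = \eta_{i-k+1} \cdots \eta_i$ for every $i \geq k$; the prefix product depends only on the last $k$ input elements.

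Granted this locality, I would assemble $f$ from combinators already available. First, extend Example~\ref{ex:window} to produce sliding windows of size $k$ (as noted at the end of that example), transforming the input into the list of $k$-tuples $[(\eta_1,\ldots,\eta_k),(\eta_2,\ldots,\eta_{k+1}),\ldots,(\eta_{n-k+1},\ldots,\eta_n)]$. Second, fold each tuple into its product in $\regmon M k$ by composing the binary product $k-1$ times: this binary product is a generalised first-order list function by item~1 of Lemma~\ref{lem:simon-strategy}, which has already been established for $\regmon M k$ earlier in this section, and $k-1$ is a constant so iterated pairings/projections suffice. By the absorption observation, the resulting list is exactly $[\eta'_k,\eta'_{k+1},\ldots,\eta'_n]$. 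Third, for each $i \in \{1,\ldots,k-1\}$ the value $\eta'_i = \eta_1 \cdots \eta_i$ is a fixed function of at most $k-1$ initial elements of the input: extract these elements with a bounded number of $\mathsf{co-append}$, projection and $\mathsf{head}$ operations, and combine them with iterated binary products. Fourth, prepend the list $[\eta'_1,\ldots,\eta'_{k-1}]$ to the windowed tail using list concatenation (Example~\ref{ex:list-concatenation}). The edge case $n < k$ is handled by an if-then-else (Examples~\ref{ex:length} and~\ref{ex:if-then-else}) branching to a purely finitary case analysis over the first $k-1$ positions.

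The main obstacle is not conceptual but organizational: aligning window endpoints with prefix-product indices so that the $i$-th entry of the output really is $\eta'_i$, and gluing the short initial segment to the windowed tail without mismatching lengths or types. All the combinators needed --- sliding windows of bounded size, binary product in $\regmon M k$, head/tail access, concatenation, length-up-to-threshold, if-then-else --- are in place by the time this claim is invoked, and the only new content is the absorption lemma that collapses an unbounded product into a bounded-depth computation.
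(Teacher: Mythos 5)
Your proof is correct and follows essentially the same route as the paper, which also reduces the prefix products to windows of size $k$ (Example~\ref{ex:window}) combined with the binary product of $\regmon M k$. The paper's version is a two-sentence sketch, whereas you additionally supply the absorption argument (that $\tau^k$ contains no register occurrences, so the product depends only on the last $k$ factors) and the bookkeeping for the first $k-1$ output entries --- useful elaboration, but not a different proof.
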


\begin{pr}
If all registers are temporary, then the product of a $\tau$-homogeneous list is the same as the product of its last $k$ elements. Therefore, we can prove the lemma using the window construction from Example~\ref{ex:window} (for a window of size $k$) and the binary product. 
\end{pr}

If the graph $G$ is connected, as supposed, then there is at most one register that is not temporary. For this register, we use the result on one register proved in the previous section. Indeed, let $[\eta_1,\ldots,\eta_n]$ be a $\tau$-homogeneous list and let $r$ be the only register which is not temporary. Each $\eta_i$ is a list of $k$ register updates. Let us denote it by: $\eta_i = [w^i_1,\ldots,w^i_k]$. Recall that $w^i_j \in (M \cup \{1,\ldots k\})^\star$. By using Claim~\ref{claim:only-temporary} and map, there is a generalised first-order list function which computes $[\eta'_1,\ldots,\eta'_n]$ where $\eta'_i = [w'^i_1,\ldots,w'^i_k]$ with $w'^i_j$ equivalent to $j$-th right hand-side in the product $\eta_1 \cdots \eta_i$ if $j \neq r$ and equal to $w^i_r$ if $j=r$. The list $w'^i_j$ for $j\neq r$ depends only on elements of $M$ and possibly on the initial valuation of the registers. Then, the list $[\eta'_1,\ldots,\eta'_n]$ can be treated as a list of updates, using only one register (register $r$), which is solved in the previous section.

\section{Regular list functions}
\label{section:regular}
In Theorem~\ref{thm:transductions}, we have shown that \FO-transductions  are the same as  first-order list functions. In this section, we discuss the \MSO version of the result.

An \MSO-transduction is defined similarly as an \FO-transduction (see Definition~\ref{def:fo-transduction}), except that the interpretations are allowed to use the logic \MSO instead of only first-order logic\footnote{This definition differs slightly from \MSO-transductions as defined in~\cite[Section 1.7]{DBLP:books/daglib/0030804}, because it does not allow guessing a colouring, but for functional transductions on objects from our type system $\types$, the colourings are superfluous and can be pushed into the formulas from the interpretation.}. When restricted to functions of the form $\Sigma^* \to \Gamma^*$ for finite alphabets $\Sigma,\Gamma$, these are exactly the regular string-to-string functions discussed in the introduction. 

To capture \MSO-transductions, we extend the first-order list functions with product operations for finite groups in the following sense.  Let $G$ be a finite group. Define its \emph{prefix multiplication function} to be
\begin{align*}
  [g_1,\cdots,g_n] \in G^*  \qquad \mapsto \qquad [h_1,\ldots,h_n] \in G^*
\end{align*}
where $h_i$ is the product of the list $[g_1,\ldots,g_i]$.  Let the \emph{regular list functions} to be defined the same way as the first-order list functions (Definition~\ref{def:lf}), except that for every finite group $G$, we add its prefix multiplication function to the base functions.

\begin{theorem}
\label{thm:msot}
Given $\Sigma,\Gamma \in \types$ and a function $f : \Sigma \to \Gamma$, the following conditions are equivalent:
	\begin{enumerate}
		\item $f$ is defined by an \MSO-transduction;
		\item $f$ is a regular list function.
	\end{enumerate}
\end{theorem}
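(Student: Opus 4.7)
The plan is to follow the two-implication template of Theorem~\ref{thm:transductions}, upgrading ``first-order'' to ``\MSO'' and ``aperiodic'' to ``arbitrary finite'' throughout. For the implication regular list function $\Rightarrow$ \MSO-transduction, the proof of Theorem~\ref{thm:transductions} already shows that the first-order list function base (the functions from Figures~\ref{fig:product-coproduct} and~\ref{fig:list}) is definable by \FO-transductions, hence a fortiori by \MSO-transductions, and the closure arguments for the four combinators go through identically in \MSO. The only new primitive to check is the prefix multiplication of a finite group $G$. Its \MSO-definability on $[g_1,\ldots,g_n] \in G^*$ is classical: for every $h \in G$, the set of positions $i$ with $g_1 \cdots g_i = h$ is recognised by the deterministic automaton over $G$ whose states are the elements of $G$, hence \MSO-definable by the Büchi-Elgot-Trakhtenbrot theorem; applying the resulting formulas in each copy of a one-dimensional interpretation produces the prefix multiplication function as an \MSO-transduction.

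For the converse, we recycle the five-step pipeline of Sections~\ref{subsection:mainresult}--\ref{section:register}. The reduction to strings (Lemma~\ref{lem:strings-wlog}) is unchanged. By the Alur-Černý equivalence between \MSO-transductions and streaming string transducers, the \MSO analogue of Lemma~\ref{lem:decomposition-of-fo-transduction} holds with the middle map $g$ now being a general rational function, whose underlying transition monoid is an arbitrary (not necessarily aperiodic) finite monoid. The register-evaluation machinery of Section~\ref{section:register}, i.e.~Theorem~\ref{thm:regmon} and Lemma~\ref{lem:product-operations-in-lf}, together with the rational-function realisation of Theorem~\ref{thm:sequential-functions}, all carry over provided the effective Simon theorem (Theorem~\ref{thm:simon-compute}) can be strengthened from aperiodic $S$ to arbitrary finite $S$, with ``first-order list function'' replaced everywhere by ``regular list function''. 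Every remaining step, including Lemmas~\ref{lem:sibling-profiles}, \ref{lem:ancestor-labels-recursive}, \ref{lem:simon-strategy} and~\ref{lem:idempotent-product}, uses the Simon theorem only as a black box and does not otherwise depend on aperiodicity.

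The single new technical ingredient is therefore the non-aperiodic strengthening of Theorem~\ref{thm:simon-compute}. Its existing lexicographic induction on $(|S|, |h(\Sigma)|)$ has cases 2 and 3 that apply unchanged, since they invoke the induction hypothesis on a strictly smaller semigroup without appealing to aperiodicity. Case 1 (singleton image $\{s\}$) is no longer constant: the product of an input of length $n$ is $s^n$, which cycles modulo the period of $s$, and is computed by the prefix multiplication in the cyclic group sitting inside the semigroup generated by~$s$. Case 4 no longer collapses to case 1: the hypothesis that every left and right translation by an element of $h(\Sigma)$ is a permutation of $S$ forces, by a standard finite-semigroup fact, that $h(\Sigma)$ is contained in the group of units of~$S$, so the subsemigroup actually in play is a finite group $G$. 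We then compute the full sequence of prefix products of the input under~$h$ using the prefix-multiplication primitive for $G$, use these prefix values to cut the input into runs on which $h$ is constant (handled by the enhanced case 1), and assemble the resulting constant-label subtrees via a bounded number of Simon-style recursions into an $h$-factorisation of depth $O(|S|)$.

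The main obstacle is precisely this group base case: one must assemble the prefix-product data into a tree satisfying the syntactic constraints of an $h$-factorisation (bounded depth, plus the rule that children of any degree-$\geq 3$ node share a common label), and carry out the assembly through regular list function combinators rather than as an abstract existence argument. This is where the new primitive is used in an essential way; once this case is settled, all downstream results of Sections~\ref{section:aperiodic}, \ref{section:fotransd} and~\ref{section:register} apply \emph{mutatis mutandis}, yielding that every string-to-string \MSO-transduction is a regular list function, and Lemma~\ref{lem:strings-wlog} then lifts the equivalence to arbitrary types in $\types$.
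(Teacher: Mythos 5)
Your proposal takes a genuinely different route from the one the paper actually carries out. The paper's proof of the top-down implication is a short derivation from existing decomposition theorems: by Colcombet's result, every \MSO-transduction on words factors as a rational function followed by an \FO-transduction; by Elgot--Mezei, every rational function factors as sequential, reverse, sequential, reverse; and by Krohn--Rhodes, every sequential function is a composition of sequential functions whose transition monoid is either aperiodic (handled by Theorem~\ref{thm:sequential-functions}) or a group (handled by the new prefix-multiplication primitive). Nothing internal to the proof of Theorem~\ref{thm:transductions} is reopened. What you propose is precisely the route the paper mentions in one closing sentence as ``an alternative approach'' -- revisiting the whole pipeline with a non-aperiodic Simon theorem -- and the trade-off is clear: the paper's route is short but leans on three external black boxes, whereas yours is self-contained but forces you to re-verify every aperiodicity-dependent step of Sections~\ref{section:aperiodic}--\ref{section:register}.

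On that re-verification your proposal has two concrete gaps. First, the claim that Lemmas~\ref{lem:sibling-profiles}, \ref{lem:ancestor-labels-recursive}, \ref{lem:simon-strategy} and~\ref{lem:idempotent-product} ``use the Simon theorem only as a black box and do not otherwise depend on aperiodicity'' is false for Lemma~\ref{lem:sibling-profiles}: its proof rests on Claim~\ref{claim:degreethree}, which opens with ``Since $M$ is aperiodic, there is some $n_0$ such that all powers $s^n$ with $n > n_0$ are the same'' and uses this to compute the labels $(s^{i-1},s^{n-i})$ with a bounded amount of head/tail surgery. For a general finite monoid the powers of $s$ cycle, and you must compute them with the prefix-multiplication primitive on the cyclic group inside $\langle s\rangle$ (after a bounded threshold); this is fixable but is exactly the kind of hidden dependence your blanket claim glosses over. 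Second, the group base case of the strengthened Theorem~\ref{thm:simon-compute} -- which you correctly identify as the crux, and which the paper's authors also single out as the ``only important change'' -- is not actually carried out: you describe assembling the prefix-product data into a bounded-depth tree satisfying the $h$-factorisation constraints as ``the main obstacle'' and then assert it can be settled. Since this is the one place where the new primitive does real work, and since the standard Factorisation Forest argument for the group case (cutting at positions where the prefix product returns to a fixed value, recursing on the strictly smaller subsemigroup/subgroup structure) must here be implemented by explicit list combinators, leaving it as an acknowledged obstacle means the proof is not complete as written.
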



\begin{proof}
	The bottom-up implication is straightforward, since the group product operations are seen to be \MSO-transductions (even sequential functions). 
	
	For the top-down implication, we use a number of existing results to break up an \MSO-transduction into smaller pieces which turn out to be regular list functions. 
	By~\cite[Theorem 2]{Colcombet:2007cm} applied to the special case of words (and not trees),  every \MSO-transduction can be decomposed as a composition of (a) a rational function; followed by (b) an \FO-transduction. Since \FO-transductions are contained in regular list functions by Theorem~\ref{thm:sequential-functions}, and regular list functions are closed under composition, it is enough to show that every rational function is a regular list function. By Elgot and Mezei~\cite{Elgot:1965fn}, every rational function can be decomposed as: (a) a sequential function~\cite[Section 2.1]{Filiot:2016iw}; followed by (b) reverse; (c) another sequential function; (d) reverse again. Since regular list functions allow for reverse and composition, it remains to deal with sequential functions. By the Krohn-Rhodes Theorem~\cite[Theorem A.3.1]{Straubing:2012vv}, every sequential function is a composition of sequential functions where the state transformation monoid of the underlying automaton is either aperiodic (in which case we use Theorem~\ref{thm:sequential-functions}) or a group (in which case we use the  prefix multiplication functions for groups).
	
	An alternative approach to proving the top-down implication would be to revisit the proof of Theorem~\ref{thm:transductions}, with the only important change being a group case needed when computing a factorisation forest for a semigroup that is not necessarily aperiodic.
\end{proof}

\section{Conclusion}
\label{section:conclusion}
The main contribution of the paper is to give a characterisation of the regular string-to-string transducers (and their  first-order fragment) in terms of functions on lists, constructed from basic ones, like reversing the order of the list, and closed under combinators like composition. 


One of the principal design goals of our formalism is to be easily extensible. We end the paper with some possibilities of such extensions, which we leave for future work.

One idea is to add new basic types and functions. For example, one could add an infinite atomic type, say the natural numbers $\Nat$, and some functions operating on it, say  the function $\Nat \times \Nat \to \set{0,1}$ testing for equality. Is there a logical characterisation for the functions obtained this way?

\MSO-transductions and \FO-transductions are linear in the sense that the size of the output is linear in the size of the input; and hence our basic functions need to be linear and the combinators need to preserve linear functions. What if we add  basic operations that are  non-linear, e.g.
\begin{align*}
(a,[b_1,\ldots,b_n]) \quad  \mapsto \quad  [(a,b_1),\ldots,(a,b_n)] \end{align*}
which is sometimes known as ``strength''?
A natural candidate for a corresponding logic would  use  interpretations where output positions are interpreted in  pairs (or triples, etc.) of input positions.

Finally, our type system is based on lists, or strings. What about other data types, such as trees, sets, unordered lists, or graphs? Trees seem  particularly tempting, being a fundamental data structure with a  developed transducer theory, see e.g.~\cite{Alur:2017gh}. Lists and the other data types discussed above can be equipped with a monad structure, which seems to play a role in our formalism. Is there anything valuable that can be taken from this paper which works for arbitrary monads?

\bibliographystyle{plain}
\bibliography{bib}
\end{document}